\documentclass{article}
\usepackage[utf8]{inputenc}
\usepackage{graphicx}
\usepackage{url}
\usepackage{appendix}
\usepackage{amssymb,amsmath,amsthm}
\usepackage{natbib}
\usepackage{color}
\usepackage{comment}
\usepackage{booktabs}
\usepackage[margin=1in]{geometry}
\usepackage{makecell}
\usepackage{amssymb} % for \checkmark
\usepackage{bbm}

\def\var{\text{var}}

\newtheorem{lemma}{Lemma}

\title{Bayesian Uncertainty Quantification for \\ Ranked Choice Voting Polls}
\author{
\begin{tabular}{c@{\hspace{2cm}}c}
Evan T.\ R.\ Rosenman\thanks{Co-first authors.} &
Jason Liang\footnotemark[1] \\
\\
\multicolumn{2}{c}{
\begin{tabular}{c}
\small Department of Mathematical Sciences\\
\small Claremont McKenna College\\
\small Claremont, CA, United States
\end{tabular}
}
\end{tabular}
}
\begin{document}

\maketitle

\begin{abstract}
Ranked choice voting (RCV) is a popular alternative voting method in which voters are asked to list their favored candidates in preference order, rather than vote for a single candidate. When these ballots are tabulated, candidates are successively eliminated, and their votes are reallocated to each voter's next-preferred choice. The process continues until a single candidate commands a majority of the active ballots and is declared the winner. As RCV gains wider adoption, the method poses novel challenges for pollsters. Unlike traditional, plurality elections, the event that a given candidate wins cannot be expressed in terms of a single population parameter. Hence, the basic concept of a margin-of-error is no longer straightforward to define. Moreover, a candidate’s ability to win may depend on both their support across the ballot and the order by which other candidates are eliminated. Existing measures of sampling uncertainty for polls of RCV elections do not clearly quantify these path-dependent outcomes. 

Here, we propose a simple, Bayesian framework to quantify uncertainty in polls of RCV elections. We cast the problem as one of estimating win probabilities for each leading candidate, and leverage a simple conjugacy relationship to estimate these probabilities conditional on the poll results. We also include applied analyses involving two prominent ranked choice voting elections: the 2021 New York City Democratic mayoral primary, in which Eric Adams narrowly defeated Kathryn Garcia in the final round; and the 2022 special election to Alaska's U.S. House seat, in which Mary Peltola was elected despite not being a Condorcet winner. Using the cast vote records from both elections, we demonstrate some of the challenges of traditional frequentist uncertainty quantification in RCV polls. We also demonstrate the utility of our approach using a poll of the NYC primary obtained from the polling firm Data for Progress. 
\end{abstract}

\tableofcontents

\section{Introduction}

Ranked choice voting (RCV) is an increasingly popular electoral system, in which voters are asked to rank their favored candidates in order rather than select a single choice. Voters can express more complex preferences using RCV, and its advocates frequently tout the potential to improve electoral representation and responsiveness \citep{vishwanath2024effects}. The method has been widely adopted in the United States, and is now used for all federal and state-level elections in Alaska and all federal elections in Maine, as well as municipal elections and primaries in cities such as New York City, San Francisco, Portland, and Minneapolis. 

In tabulations for ranked choice voting elections, ballots are initially assigned to their first-ranked candidate. If no candidate receives a majority of the first-choice votes, the candidate with the least first-choice support is eliminated and the candidates ranked after the eliminated candidates on any ballot each move up one spot. For example, in tabulations following the elimination of a voter’s top-ranked candidate, their second-ranked candidate becomes top-ranked, their third-ranked candidate becomes second-ranked, and so on. In each round of eliminations, the highest-ranked non-eliminated candidate becomes the first choice of each ballot. Once all candidates on a ballot have been eliminated, that ballot is considered “exhausted” and no longer counts toward the vote totals in future rounds. The rounds of elimination and retallying continue until one candidate secures a majority of the active ballots. See \cite{PaG3955} for a detailed explanation of RCV and some variations. 

RCV elections differ from traditional, plurality elections in that the election winner's victory depends on both how frequently candidates are ranked at the top of ballots \emph{and} the relative rankings of different frontrunners across ballots.  Statistically, this means that RCV election outcomes cannot be expressed as a function of a single population parameter. Moreover, unlike in plurality elections, the event that a candidate wins the election is a union of several disjoint pathways to victory. Hence, traditional frequentist tools for summarizing polling uncertainty, such as the margin-of-error, cannot be straightforwardly adapted to RCV elections. This poses a methodological challenge for survey analysts as they seek to report inferential results for polls conducted on RCV elections. 

In this paper, we propose a simple Bayesian framework for quantifying uncertainty in ranked choice voting polls. We formulate the problem in terms of ``ballot orderings," the unique possible ways a voter can rank a set of candidates. For example, in an RCV election with candidates A, B, and C, a voter may rank A first, B second, and C third; rank only A; or rank B followed by C. Each option corresponds to a unique ballot ordering. The observed frequencies of these orderings are sufficient statistics for the underlying population proportions, which in turn uniquely determine the RCV winner. Moreover, we can naturally model the voters as independent and identically distributed draws from a multinomial distribution over the set of possible ballot orderings. Using a well-known conjugacy relationship, we can obtain a posterior Dirichlet distribution over the ballot orderings when conditioning on the polling data, even if we default to a flat prior over ballot orderings. Sampling from the posterior gives a straightforward way to estimate the probability that each candidate triumphs in the eventual election.  

Throughout, we consider two recent elections which demonstrate the path-dependent nature of candidate elimination -- and thus the challenges of uncertainty estimation -- in ranked choice voting polls. The first election is the June 22, 2021 Democratic mayoral primary in New York City, in which Brooklyn Borough President Eric Adams won the final round by a narrow 0.8\% margin. Many pre-election polls had suggested his competitor in the final round would be civil rights attorney Maya Wiley, whom Adams would have defeated comfortably, but Wiley was actually eliminated in the second-to-last round. Instead, former NYC Sanitation Commissioner Kathryn Garcia advanced to the final round and nearly defeated Adams. The second election is the August 16, 2022 U.S. House special in Alaska, which resulted in the election of Democrat Mary Peltola to Congress. The election exhibits an extreme form of path dependency: the first-eliminated candidate, Nick Begich III, actually would have defeated Peltola in a head-to-head election. 

For both elections, we use the cast vote record (CVR) -- electronic records that list the full set of rankings for each voter who cast a ballot -- to simulate unbiased polls \citep{kuriwaki2024cast, shirani2018disentangling}. For the New York City race, we also obtain individual-level response data from the final poll conducted by the outlet Data for Progress, which correctly identified that Garcia would be highly competitive with Adams if she made it to the final round \citep{dfp2021nycrcvmemo}. 

The remainder of this paper proceeds as follows. Section \ref{sec:litReview} provides relevant background, including a literature review, notation and assumptions, and an introduction to the datasets for Alaska and NYC elections. Section \ref{sec:challenges} considers the challenges in using standard frequentist tools to assess uncertainty in ranked choice voting polls. Section \ref{sec:method} introduces our method, and Section \ref{sec:app} demonstrates its utility on the Alaska and NYC elections. Section \ref{sec:discussion} concludes. 

\section{Background}\label{sec:litReview} 

\subsection{Notation \& Assumptions}\label{sec:assumptions}

In the sections to follow, we suppose we sample $i = 1, \dots, n$ individuals from a population of $N$ total voters who will participate in an election, where typically $n \ll N$. We will assume the polled units are a true random sample from the voter population, meaning every individual in the population of voters has an equal probability of being sampled into the poll. In practice, pollsters frequently assign survey weights $w_i$ to individuals in their samples to account for unequal sampling probabilities. Such weights are utilized in the Data for Progress poll described in the next section. When using survey weights, we assume that the $w_i$ are correctly specified such that weighting can unbiasedly recover the distribution of preferences in the underlying voter population.

We denote as $y_i$ the response recorded from polled individual $i$. For simplicity, we will ignore ``Don't know" or ``Unsure" responses, and suppose that the poll comprises only respondents who state a voting intention. In the case of a two-candidate, first-past-the-post election, we can encode $y_i \in \{0, 1\}$ by choosing $y_i = 1$ to denote support for one of the two candidates and $y_i = 0$ to denote support for the other candidate. 

By contrast, in the case of RCV elections, we encode $y_i \in \mathcal{P}$, where $\mathcal{P}$ is the finite set of all possible ballot orderings among the set of candidates, $\mathcal{C}$. For example, in an RCV election with three candidates in which voters may rank up to three choices, $|\mathcal{P}| = 15$. Denoting the candidates as $A, B,$ and $C$ for simplicity,
\[ \mathcal{P} = \{p_1,\ldots,p_{15}\} = \{ A, B, C, AB, AC, BA, BC, CA, CB, ABC, ACB, BAC, BCA, CAB, CBA\}. \]
More generally, given a pool of $K$ candidates and allowing voters to rank up to $R$ candidates so that $R \leq K$, the number of unique ballot orderings is
\[ |\mathcal{P}| = \sum_{r=1}^{R} \frac{K!}{(K-r)!}. \]
Lastly, let
\[ \boldsymbol{\pi} = (\pi_1,\ldots,\pi_{|\mathcal P|}) \]
denote the population distribution of ballot orderings, where
\[ \pi_j = \Pr(y_i = p_j). \]
The vector $\boldsymbol{\pi}$ completely characterizes the distribution of voter preferences and uniquely determines the outcome of the RCV election (in the absence of exact ties). 

Throughout the manuscript, we will interchangeably use the terms ``first-past-the-post (FPTP)" and ``plurality" elections to mean those elections in which voters cast ballots for a single candidate and the candidate with the largest number of votes wins.  

\subsection{Standard FPTP Polling Practices} \label{sec:stan}

We briefly discuss common practices for uncertainty quantification in standard, two-candidate FPTP elections. Let $\pi_A$ and $\pi_B$ denote the true proportions of voters who support candidates $A$ and $B$, respectively, and define the election margin as 

\[ d_{AB} = \pi_A - \pi_B. \]
Because $\pi_A + \pi_B = 1$, the outcome of the election is determined by the sign of $d_{AB}$: candidate $A$ wins if $d_{AB} > 0$, while candidate $B$ wins if $d_{AB} < 0$.

Consequently, uncertainty quantification in a two-candidate FPTP poll reduces to inference on a single scalar parameter. Pollsters can estimate $\hat \pi_A$ and $\hat \pi_B$ based on the proportion of survey respondents who state a preference for each candidate, and can then estimate $\hat d_{AB} = \hat \pi_A - \hat \pi_B$. Under our assumptions (as well as some weak regularity conditions), $\hat d_{AB}$ is unbiased and approximately normally distributed, with variance
\[ \var\left(\hat d_{AB}\right) = \frac{1}{n} \bigg(\pi_A(1 - \pi_A) + \pi_B(1 - \pi_B) + 2\pi_A\pi_B\bigg).\]
Because the quantity in parentheses is bounded above by $1$, with equality achieved at $\pi_A=\pi_B=1/2$, it is standard to replace the variance by the conservative estimate $1/n$.

Under these assumptions, pollsters can straightforwardly report an $\alpha$-level confidence interval for the margin,
\[ \hat d_{AB} \pm z_{1-\alpha/2} \times \frac{1}{\sqrt{n}},\] 
where $z_{1-\alpha/2}$ is the $1 - \alpha/2$ quantile of a standard normal distribution. The ``margin of error" -- i.e. half the length of the confidence interval -- is also typically reported. Because the problem is governed by a single margin, winner determination can also be framed as testing
\[ H_0 : d_{AB} \le 0 \qquad \text{vs.} \qquad H_1 : d_{AB} > 0. \]
Under the normal approximation, the corresponding one-sided test is uniformly most powerful. For technical details underlying these standard practices, see Appendix \ref{app:fptp}.

\subsection{Related Literature and Current Practices}\label{sec:currPractice}

Much of the existing literature on ranked choice voting considers the implications of using the method as an electoral reform. An important series of papers considers the differential impacts of adoption of RCV along dimensions such as age, ethnicity, and socioeconomic status. Some work considers the direct impact on voter participation  \citep{mcdaniel2016writing, DowlingTolbertMicatkaDonovan2024Turnout}, while others make use of cast vote records to empirically evaluate how voters fill out their ballots \citep{kuriwaki2024cast}. Some researchers have found differential burdens of reform across key groups, though these claims are highly contested in the literature \citep{BurnettKogan2014, Pettigrew2023Overvotes, cormack2026more, maloy2021impact, coll2021demographic}. Other work has considered the impact of RCV adoption on political polarization and candidate behavior \citep{atkinson2024beyond, kropf2021using}. 

Social choice theorists, mathematicians, and computer scientists have also shown deep interest in ranked choice voting. Because RCV outcomes depend on the full structure of voters' preference rankings, it exhibits several counterintuitive properties. These include the previously-mentioned failure to obey the Condorcet criterion, sensitivity to elimination orderings, spoiler effects, and ``monotonicity" paradoxes wherein voters switching their preference to a given candidate can sometimes ensure that candidate loses the election \citep{arrow2012social, PaG3955,graham-squire,drutman2021we, McCuneWilson2023Spoiler}. Closest to our work are algorithmic approaches for computing the minimum number of ballot perturbations required to change the outcome of an election, as well as the overall sensitivity of RCV results to ballot modifications \citep{cary2011estimating, blom2019toward, blom2020did, deshpande2026simpler}.

The methodological literature on uncertainty quantification for ranked choice voting polls is sparse. Moreover, few published polls include justifications for their reported uncertainty measures. For this reason, we conducted a survey across polls for both the 2021 and 2025 New York City Democratic primaries -- the former of which led to the selection of Eric Adams as the Democratic nominee, and the latter of which led to the ascension of Zohran Mamdani. We considered the uncertainty measures in polls conducted by twelve prominent pollsters \citep{dfp2021nyccrosstabs, noauthor_citizen_nodate,noauthor_marist_nodate,noauthor_ipsos_2021,noauthor_emerson_nodate,services_change_2021,bridges_siena_2025,manhattan_institute_manhattan_nodate,csp_center_2025,mumford_emerson_2025,empire_empire_nodate,singh_yale_2025,marist_marist_nodate}. 

In each poll, respondents were asked to rank their preferences. The polls reported the resultant rankings, and most pollsters also simulated rounds of ranked choice voting until a winner was identified. The findings are summarized in Table \ref{tab:polls}. Under ``Uncertainty," we report ``MoE" when margins of error are used, and ``mMoE" when ``modeled" margins of error are used, which incorporate the impact of weighting individual survey responses (for more details, see the Appendix, Section \ref{sec:mmoe}). 

\begin{table}[h]
\centering
\resizebox{\textwidth}{!}{
\begin{tabular}{lllrrcc}
\toprule
Year & Pollster & Uncertainty & \makecell{Sample\\Size $n$} & MoE (\%) &
\makecell{Different\\MoEs by Round} &
\makecell{Alternative\\Uncertainty} \\
\midrule
2021 & Manhattan & MoE & 500  & 4.4 &  &  \\
2021 & Change & mMoE & 822  & 3.6  &  &  \\
2021 & FairVote & MoE & 800  & 3.5  &  & Bootstrap \\
2021 & Emerson & MoE & 1,284 & 2.7  &  &  \\
2021 & Marist & MoE & 876  & 3.8  & \checkmark &  \\
2021 & Ipsos & mMoE & 906  & 5.7  &  &  \\
2021 & Data for Progress & MoE & 1354 & 3.0 &  &  \\
\midrule
2025 & Siena & mMoE & 556  & 4.9  &  &  \\
2025 & Emerson & MoE & 800  & 3.4  & \checkmark &  \\
2025 & YouGov & mMoE & 416  & 6.7  &  &  \\
2025 & Empire & MoE & 3,012 & 1.8  &  &  \\
2025 & Manhattan & MoE & 1,000 & 3.1 &  &  \\
2025 & Marist & mMoE & 1,350 & 4.3 &  &  \\
2025 & Center for Strategic Politics & MoE & 580 & 4.1 &  &  \\
\bottomrule
\end{tabular}
}
\caption{Poll uncertainty measures by year and pollster.}
\label{tab:polls}
\end{table}

We verified that all reported MoEs were conservative by deriving the reported MoEs using reported sample sizes. Five pollsters incorporated the design effect to compute mMoEs. Two pollsters reported conservative round-level -- as opposed to survey-level -- MoEs. Conservative MoEs become wider in later rounds because the sample size (i.e. the number of active ballots) declines. While most pollsters did not publish a methodology, four pollsters stated that their MoE estimates incorporated the design effect from weighting, and six pollsters reported MoEs equal to the conservative MoE estimate for the given sample size. Emerson College's poll for NYC 2025 gave separate MoEs for the first and last round of polling, both conservative estimates based on the number of ballots in each round, while all other polls reported only a single, survey-level MoE. Among alternative uncertainty measures, only FairVote ran a bootstrap simulation based on their poll results for NYC 2021. The survey shows that the conservative survey-level MoE has been adopted as the standard method to communicate sampling uncertainty in RCV elections. 

\subsection{Election Datasets}

To demonstrate practical challenges in deploying standard methods on ranked choice voting polls, we collected datasets from two recent, high-profile RCV elections. In each case, we obtained the cast vote records from the relevant board of elections. Analyses of cast vote records have become an increasingly popular tool for understanding voter preferences \citep{kuriwaki2024cast}. They are particularly useful in the context of ranked choice elections, because random sampling from the CVR allows us to simulate unbiased polls, in which voters' preferences are truly selected from the population with equal probability \citep{shirani2018disentangling}. 

\subsubsection{New York City 2021 Democratic Primary}

The first election we consider is the June 22, 2021 Democratic mayoral primary in New York City. This was the first citywide election conducted using ranked choice voting, after a referendum for RCV adoption was passed in 2019. Voters were permitted to rank up to five different candidates in each race. 

The mayoral primary included 13 official candidates, and just shy of one million ballots were cast. The frontrunners were Brooklyn Borough President Eric Adams, City Sanitation Commissioner Kathryn Garcia, civil rights lawyer Maya Wiley, and venture capitalist Andrew Yang. Many pre-election polls projected that Adams would win comfortably, but he ultimately triumphed in the final instant runoff round by a margin of only 0.8\%. This was due to the fact that his rival in the final round was -- somewhat unexpectedly -- Garcia, rather than Wiley, whom Adams would have defeated more comfortably. Adams went on to win the general election and serve for one term as the city's mayor.

Along with the cast vote records, we also obtained individual-level poll responses from the final poll conducted by Data for Progress, an American think tank and polling firm \citep{dfp2021nyccrosstabs}. The poll was conducted from June 18 to June 20, and asked likely Democratic primary voters citywide for their preferences for the mayor and comptroller races. Respondents in Manhattan were also asked about their preferences for the District Attorney and Borough President races. The citywide races had n = 1,354 respondents with a reported conservative survey-level margin of error (MoE) of plus or minus three percentage points. These responses were weighted by age, gender, education, race, borough, and voting history to reflect the likely citywide electorate.

%The poll was conducted from June 18 to June 20, and asked likely Democratic primary voters for their preferences for the mayor and comptroller races. The poll for mayor had $n = 1,354$ respondents with a reported conservative survey-level margin of error (MoE) of $\pm 3\%$. Poll responses were weighted by age, gender, education, race, borough, and voting history to reflect the likely electorate. 

%The poll concluded that Adams, Garcia, and Wiley were likely the final three candidates. In the penultimate round, Adams led the poll ($38\%$ support) and Garcia and Wiley were nearly tied ($31\%$ support each). The pollsters ran scenarios eliminating either Wiley or Garcia. When Garcia was eliminated, Adams won the final round against Wiley at a margin of $54\%$ to $46\%$. When Wiley was eliminated, Garcia clinched the victory over Adams with $52\%$ to $48\%$ of the vote. Given the tie between Wiley and Garcia in the penultimate round and a statistically insignificant final-round matchup between Adams and Garcia, the pollsters concluded that the race was ``firmly too close to call when considering the full range of possibilities that small changes in voter rankings could cause" \citep{dfp2021nyccrosstabs}. 

The poll concluded that Adams, Garcia, and Wiley were the top contenders in the race. In a memo published the day before the election, the pollsters simulated rounds of ranked choice voting \citep{dfp2021nycrcvmemo}. In the penultimate, simulated round (for which $n = 1,220$ respondents' ballots remained active), the pollsters found Adams led with $38\%$ support, while Garcia and Wiley were tied with $31\%$ support each. Given the tie, the memo considered scenarios in which Wiley or Garcia were eliminated first. When Garcia was eliminated first, Adams won the final round against Wiley by a margin of 54\% to 46\%. However, when Wiley was eliminated first, Garcia led Adams 52\% to 48\% in the final round, a lead well within the margin of error. Given the tie between Wiley and Garcia in the penultimate round and a statistically insignificant final-round matchup between Adams and Garcia, the memo concluded that the race was ``firmly too close to call when considering the full range of possibilities that small changes in voter rankings could cause" \citep{dfp2021nycrcvmemo}.

\subsubsection{Alaska 2022 U.S. House Special}

We also obtained cast vote records from the August 16, 2022 U.S. House special election in Alaska. The election was triggered by the death of longtime incumbent Don Young, and drew about 190,000 voters. It was the first election in Alaska to use ranked choice voting, after the system had been approved by voters in a 2020 ballot referendum. 

Three candidates -- former State Representative Mary Peltola, former governor Sarah Palin, and Nick Begich III, scion of a powerful Alaskan political family -- competed in the election. Peltola defeated Palin in the final round by about three points, becoming the first Democrat elected statewide in Alaska in nearly fifteen years. However, her victory was contingent on the fact that Begich had been eliminated in the first round. Based on the underlying ballot preferences, Begich actually would have defeated Peltola in a one-on-one matchup \citep{graham2022mathematical}.

This reflects a well-known property of ranked choice voting. A candidate who is preferred one-to-one against each of their competitors is known in the parlance of social choice theory as a ``Condorcet winner" \citep{de2014essai, arrow2012social} -- but RCV elections are \emph{not} guaranteed to select the Condorcet winner. In this election, Begich was a Condorcet winner: the cast vote records show that a majority of voters preferred him to Palin, and a majority of voters preferred him to Peltola. However, because he drew the fewest first-place ballots, he was eliminated before he could compete head-to-head against either candidate.

\section{Unique Challenges in RCV Polls} \label{sec:challenges}

\subsection{Sampling Uncertainty Need Not Favor True Winners}\label{sec:samplingBias}

\begin{figure}[t]
    \centering
    \includegraphics[width=0.60\textwidth]{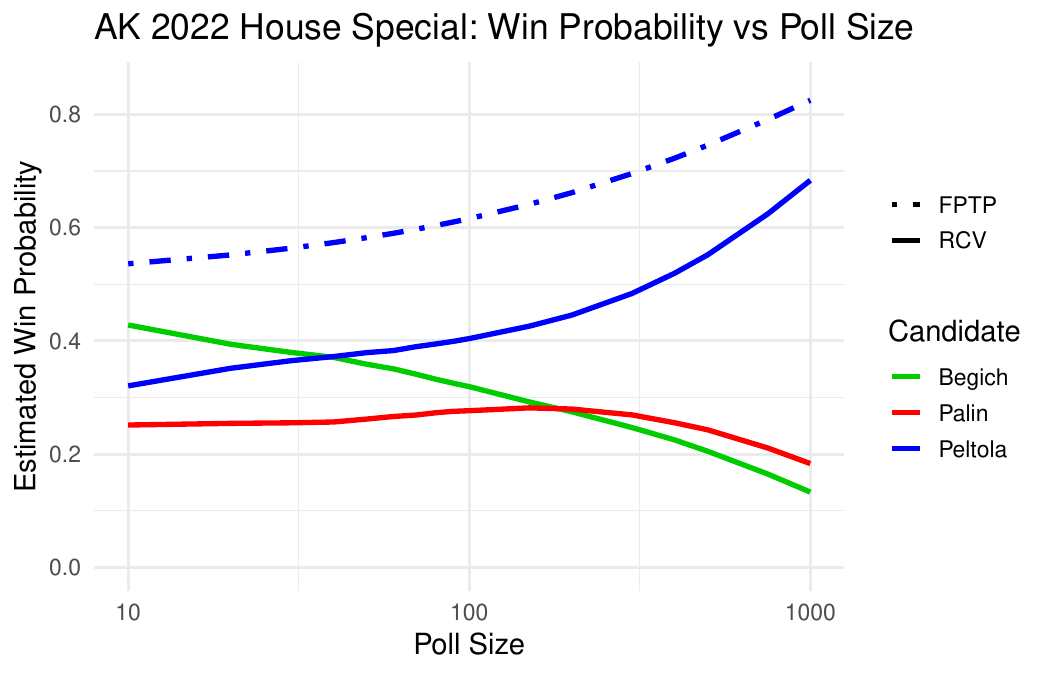}
    \includegraphics[width=0.39\textwidth]{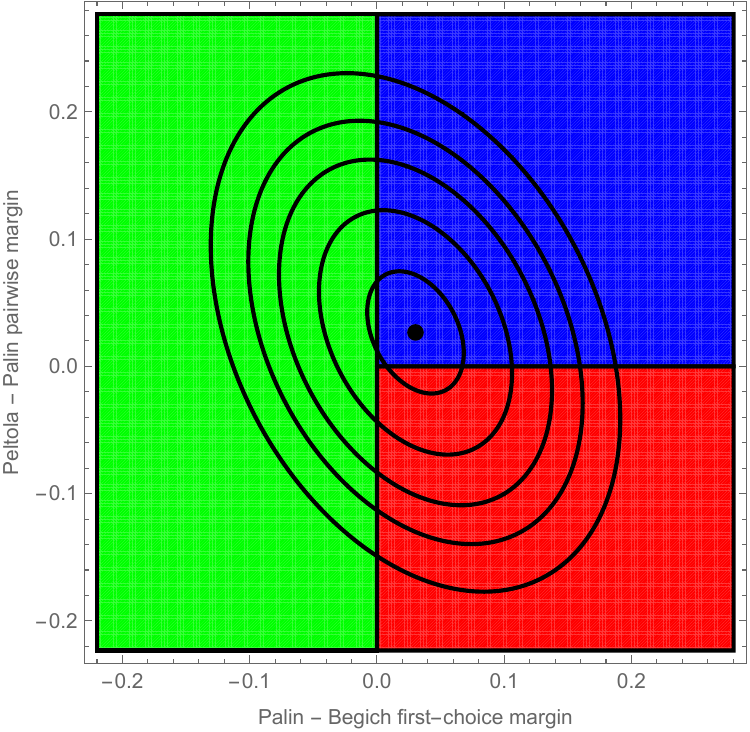}
    \caption{The left panel plots the probability that each candidate wins, in an unbiased sample from the cast vote records of the 2022 Alaska House special election. Solid lines show RCV win probabilities, estimated from 500{,}000 random samples of each size. The dotted blue line shows a two-candidate first-past-the-post (FPTP) benchmark with the same final-round Peltola-Palin margin of 2.96 percentage points. The right panel shows the geometry underlying these probabilities. The x axis gives the sampled Palin-Begich first-choice margin, while the y axis gives the sampled Peltola-Palin pairwise margin. Colored regions correspond to the winner implied by these margins (under the assumption that remaining election margins retain their true signs). The ellipses are contours of the joint sampling distribution centered at the true election outcome. }
    \label{fig:ak-2022-rcv-fptp}
\end{figure}

We begin with an empirical example. In the left panel of Figure \ref{fig:ak-2022-rcv-fptp}, we repeatedly draw 500,000 samples of different sizes (ranging from 10 to 1,000) from the cast vote records of the 2022 U.S. House special election in Alaska. Each sample can be considered an unbiased poll, as the individual voters have identical probabilities of being selected.  Hence, we are considering the effect of \emph{only} sampling variability, not selection bias or other forms of survey error. 

Within each unbiased sample, we compute the winner under the standard ranked choice voting algorithm. We then estimate the probability that each candidate wins in a poll of the given size by calculating the frequency of victory across the half-million pseudo-polls. The left panel of Figure \ref{fig:ak-2022-rcv-fptp} plots these estimated win probabilities as a function of the poll size. The blue, dotted line provides a reference comparison based on a hypothetical, two-candidate FPTP election between Peltola and Palin. It gives the probability that Peltola would lead in an unbiased poll if the election were a head-to-head contest with the same final margin observed in the actual election, Peltola +2.96\%.

The left panel of Figure \ref{fig:ak-2022-rcv-fptp} reflects several curious patterns that can emerge in unbiased ranked choice voting polls. At sufficiently small poll sizes, the most frequent winner in the pseudo-polls is Begich, even though he was actually eliminated first in the election. Such behavior is impossible in unbiased polls of  FPTP elections: at any sample size, the majority of the mass of the sampling distribution must lie within the region in which the true winner is the victor. Here, sample sizes of approximately 50 respondents are required for Peltola to emerge as the most probable victor. A second pathology in Figure  \ref{fig:ak-2022-rcv-fptp} is the evolution of the win probabilities for Palin, which do not decrease monotonically with the sample size. Palin actually wins in about 28\% of unbiased polls at sample sizes of 150 or 200 respondents, vs. about 25\% of unbiased polls at sample sizes of 10 or 20 respondents. Again, non-winning candidates cannot improve their win probabilities at larger sample sizes in FPTP polls. 

These behaviors emerge from two key features of the underlying ballot preferences. First, in the true election, Begich was eliminated by a roughly 3\% margin: he received 28\% of first-choice votes in the first round, vs. 31\% for Palin. In the second round, Palin also fell to Peltola by approximately 3\%. These narrow margins are difficult to detect with even a moderately large sample size; the reference line atop the plot shows that only about 83\% of polls at size $n = 1,000$ would correctly detect Peltola's final-round victory margin of 3\% in a FPTP race. Second, the elimination order matters. Recall that Begich would have defeated Peltola had he advanced to the second round. Many small-sample polls will mistakenly estimate that Palin is eliminated first, and the majority of these polls will estimate that Begich is the overall victor, as he truly was preferred to Peltola by a majority of voters.

The right panel of Figure \ref{fig:ak-2022-rcv-fptp} helps visualize the fragility of the RCV winner, especially in small polls. On the x axis, we plot the margin between Palin and Begich among first choice ballots. On the y axis, we plot the Peltola vs. Palin ``pairwise" margin -- that is, among all voters, the frequency of those who rank Peltola above Palin minus the frequency of those who rank Palin above Peltola. These are the two margins that were approximately 3\% in the election, and the true value is represented by the black dot in the upper right. 

The x and y axes represent the narrowest margins in the actual election. Peltola's true first-choice margin over Palin was nearly 9\%, while Begich's pairwise margin against Peltola was over 5\%. Hence, we make the simplifying assumption that any polls exhibiting a negative Palin-Begich first-choice margin will show Begich as the eventual winner (since Palin will be eliminated first, and Begich will defeat Peltola head-to-head). In the case of a positive Palin-Begich first-choice margin, we assume the poll's winner is decided by the Peltola-Palin pairwise margin. Under these assumptions, the winner regions are colored appropriately: green for Begich, red for Palin, and blue for Peltola. 

The sampling distribution for these margins will be approximately jointly normal via the Central Limit Theorem. It will be centered at the true parameter values, with a covariance matrix computable from the true population parameters. Level sets for this distribution are superimposed on top of the plot, with larger ellipses corresponding to the sampling distribution for small polls, and smaller ones corresponding to larger polls. Geometrically, we can see the consequences of the fact that the true parameter values are quite near to the decision boundary. Among small polls, the plurality of the mass lies in the green region, corresponding to a Begich victory. Only as the poll size grows -- such that the sampling distribution contracts around the true parameter value -- does the majority of the mass lie in the blue region corresponding to a Peltola victory. 

In a two-candidate FPTP election, the winner is determined by a single margin, so the parameter space is divided by a single decision boundary. As the sampling distribution shrinks around the true parameter value, the probability of identifying the true winner necessarily increases. RCV elections are governed by a different geometry. The winner depends on multiple margins simultaneously, producing a partition of the parameter space into distinct winner regions, and small perturbations in the sampled margins can move a poll across multiple decision boundaries.

More broadly, Figure \ref{fig:ak-2022-rcv-fptp} illustrates that point estimation is especially fraught in RCV elections. If the election is determined by narrow margins and sample sizes are not particularly large, samples can easily cross one or multiple decision boundaries. Uncertainty quantification is even more important in such cases. 

\subsection{Standard Methods Break Down in RCV Polls}\label{sec:standardMethods}

Per our discussion in Section \ref{sec:currPractice}, the standard practice in ranked choice voting polls is to simulate instant runoff elections based on the responses $y_i, i = 1, \dots, n$ given in the poll. Most commonly, a single, conservative margin of error is reported for the entire poll, though sometimes different MoEs are reported for different elimination rounds. 

In Table \ref{tab:elims_below_moe}, we sample repeatedly from the cast vote records of both the Alaska and NYC elections and compute the average proportion of elimination margins that fall below a conservative margin of error computed at the $\alpha = 0.05$ level. We draw $50,000$ samples at each sample size $n$, and consider a range of values corresponding to modestly-sized and large polls. While there are only three candidates in the Alaska special, there are 13 in the NYC mayoral primary. We can see that -- at least in heavily contested elections such as these -- statistically insignificant elimination margin estimates are the norm, rather than the exception. The problem could be partially mitigated by using unbiased estimates for the round-specific elimination MoEs. But the fundamental problem remains: testing many small margins simultaneously means that some are likely to be indistinguishable from sampling variability. 

\begin{table}[ht]
\centering
\label{tab:elims_below_moe}
\begin{tabular}{rcc}
\toprule
Sample Size $n$ & NYC 2021 Mayoral Primary & Alaska 2022 Special Election \\
\midrule
600  & 91\% & 95\% \\
800  & 87\% & 94\% \\
1,000 & 84\% & 92\% \\
1,200 & 83\% & 90\% \\
1,400 & 82\% & 89\% \\
\bottomrule
\end{tabular}
\caption{Proportion of elimination margins falling below the conservative margin of error (MoE) under repeated sampling from cast vote records. Results are shown for the 2022 Alaska U.S. House Special Election (3 candidates) and the 2021 New York City Democratic Mayoral Primary (13 candidates).}
\end{table}

This yields a tension in interpretation. When there are a large number of candidates, small true elimination margins, or modestly-sized polls, we expect some estimated elimination margins to lie within the margin of error. In Section \ref{sec:samplingBias}, we saw that elimination orders did affect the true winner in the Alaska House special. But in many cases, early elimination orders are irrelevant to the outcome of the election. For a more typical case, see Figure \ref{plot:elimOrders}. Here, we sample 50,000 times from the cast vote record of the  2021 New York Democratic mayoral primary. Our samples are of size $1,354$, matching the size of the Data for Progress poll. Within each sample, we simulate the instant runoff algorithm on the sampled ballots, and record the order in which candidates are eliminated, such that the first-eliminated candidate is assigned a ``1," the second-eliminated candidate a ``2," etc. 

\begin{figure}
    \centering
    \includegraphics[width=1\linewidth]{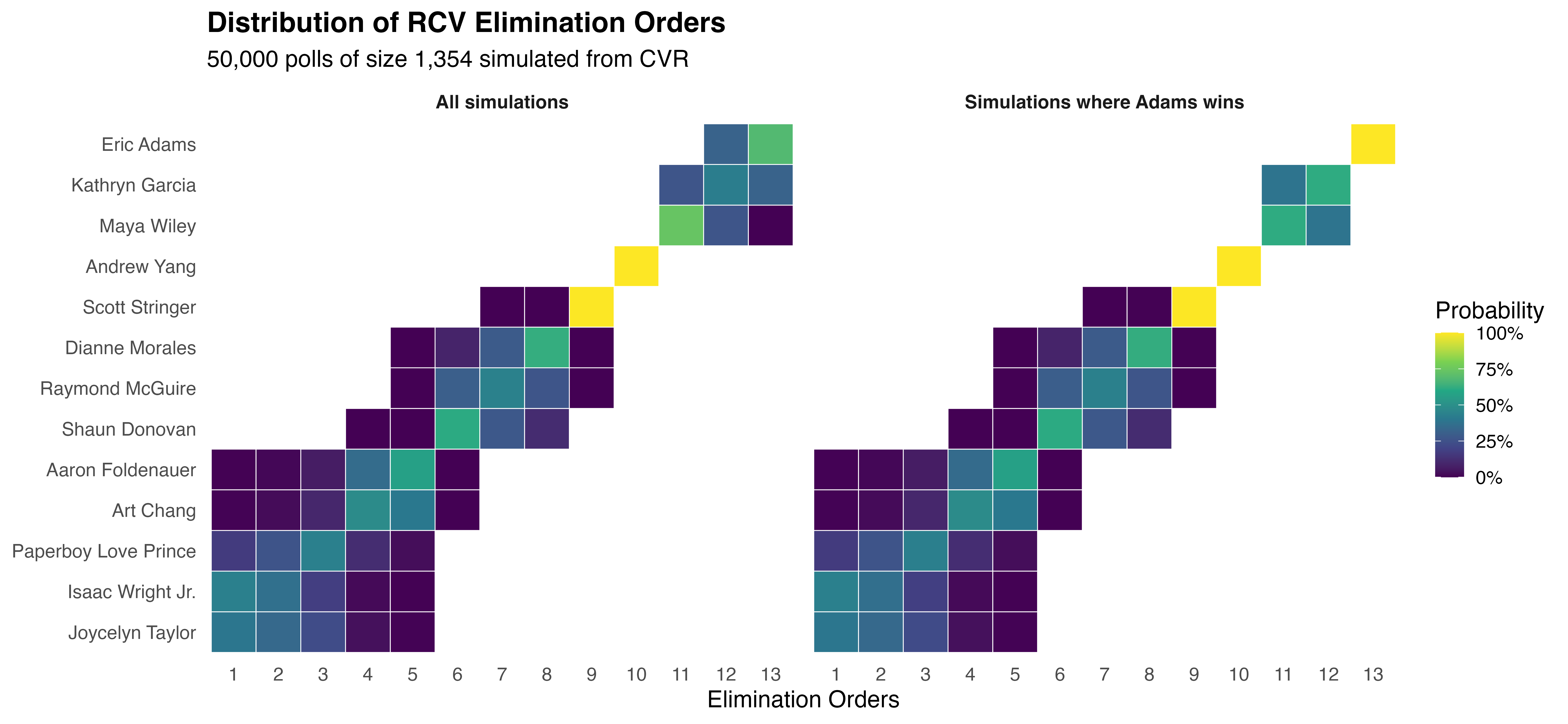}
    \caption{Order in which each candidate is eliminated in simulated instant runoffs in random samples of size $1,354$ drawn from the cast vote record of the 2021 New York Democratic mayoral primary. Left panel: elimination orders among all 50,000 simulations. Right panel: elimination orders in the 68\% of simulations in which Eric Adams wins.}
    \label{plot:elimOrders}
\end{figure}

In the left panel of Figure \ref{plot:elimOrders}, we plot the distribution of elimination orders by candidate across all 50,000 simulations. In the right panel, we restrict to the 68\% of samples in which Eric Adams is the winner. Within each plot, we note that there is wide variability in the elimination orders for low-polling candidates; for example, we see that Aaron Foldenauer and Art Chang are eliminated first, second, third, fourth, fifth, and sixth with positive probability across the samples. However, across the two plots, the distributions are almost identical for all candidates except Eric Adams, Kathryn Garcia, and Maya Wiley. This is due to the fact that Andrew Yang had notably higher support than all other candidates except these three candidates, but notably lower support than any of these three candidates. Hence, Yang serves as a ``bottleneck'' -- candidate eliminations prior to Yang's elimination can occur in any order without affecting the ultimate RCV winner. As a result, considerable uncertainty about exact elimination margins and orderings can coexist with stable probabilities of the ultimate winner. 

Herein lies the central difficulty. Pollsters are typically interested in identifying the election winner, and assessing whether their lead is statistically fragile. However, they are often unable to detect the precise elimination order with high statistical confidence. And while many elimination orders do not alter the electoral winner, some do. In cases where the path dependency of the election is well understood, pollsters could report a small number of alternative scenarios corresponding to different elimination pathways -- e.g., reporting both the Adams vs. Wiley and Adams vs. Garcia final round margins in the NYC mayoral primary. However, this approach relies on substantial understanding of the race dynamics and becomes infeasible if there are many determinative eliminations. 

Fundamentally, these challenges reveal a mismatch between the uncertainty procedures -- which quantify uncertainty in individual elimination margins (and therefore in particular elimination orders) -- and the objective, which is to assess uncertainty about the election winner.

\subsection{Frequentist Approaches to RCV Poll Uncertainty}

As discussed in the prior section, reporting point estimates along with a margin of error is insufficient to quantify uncertainty in RCV polls. However, the problem can be formulated in terms of frequentist hypothesis testing procedures. In this section, we formalize the problem but showcase some of the drawbacks to a purely frequentist approach. 

An inherent aspect of winner identification in ranked choice voting is that it is a high-dimensional problem, as demonstrated by the following result. 

\begin{lemma}\label{lemma:numParams}
Suppose we have an RCV election with $K$ candidates, and we suppose that no winner is declared until the final round. Identifying the winner is a function of 
\[  \sum_{s=2}^K {K \choose s}(s-1) = 2^{K-1}(K-2)+1 \]
parameters, where each is a ``margin" between two candidates in a given elimination round. 
\end{lemma}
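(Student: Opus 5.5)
The argument has two parts: a combinatorial count of the margins the instant-runoff procedure can ever need, followed by a closed-form evaluation of that count.

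\textbf{The count.} An elimination round is determined entirely by the set $S \subseteq \mathcal{C}$ of candidates still active when the round begins.

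Active ballots in that round are reassigned to their highest-ranked member of $S$. The tallies therefore depend on $\boldsymbol{\pi}$ only through the first-choice shares restricted to $S$. Because no winner is declared before the final round, every round with $|S| = s \geq 3$ ends by eliminating a candidate. The final round, with $|S| = 2$, decides the winner by a single pairwise margin.

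In a round with active set $S$, the ordering of the $s$ tallies is determined by $s-1$ margins, namely differences between candidates adjacent in that ordering. It is \emph{not} determined by all $\binom{s}{2}$ pairwise differences. Even deciding only who is eliminated could require comparing the lowest candidate against each of the other $s-1$ candidates, which is again $s-1$ margins per round. So each active set $S$ contributes exactly $|S|-1$ margin parameters.

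Sampling error can make the poll follow any elimination path, so every nonempty path is a possibility. Any subset $S$ with $|S| \geq 2$ arises as the active set along some path: just eliminate the complement of $S$ in any order. Hence identifying the winner, viewed as a function over all possible $\boldsymbol{\pi}$, is a function of the margins indexed by all such $S$. Summing over subset sizes gives the left-hand side,
\[ \sum_{s=2}^K \binom{K}{s}(s-1). \]

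I would also state explicitly the sense in which this is the right count. These are the margins the elimination procedure can query. Distinct $S$ give functionally distinct margins, because they depend on different restrictions of $\boldsymbol{\pi}$. I would not claim algebraic independence of these margins.

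\textbf{The closed form.} Write $(s-1) = s - 1$ and use the two standard identities
\[ \sum_{s=0}^K s\binom{K}{s} = K 2^{K-1}, \qquad \sum_{s=0}^K \binom{K}{s} = 2^K. \]
The $s=1$ term is zero. The $s=0$ term equals $-1$, so removing it from the full sum adds back $1$. This gives
\[ \sum_{s=2}^K \binom{K}{s}(s-1) = \sum_{s=0}^K \binom{K}{s}(s-1) + 1 = K2^{K-1} - 2^K + 1 = 2^{K-1}(K-2)+1. \]
As a check, $K=3$ gives $3\cdot 1 + 1\cdot 2 = 5 = 4\cdot 1 + 1$.

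\textbf{Main obstacle.} The subtle step is justifying $s-1$ margins per round rather than $\binom{s}{2}$ or just one. I would argue that the elimination decision requires locating the minimum tally, which cannot be done with fewer than $s-1$ comparisons in general, and that $s-1$ comparisons suffice. I would also argue that tallies across different $S$ cannot be shared, because the reallocation of ballots differs from one active set to another.
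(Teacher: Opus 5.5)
Your proof has the same skeleton as the paper's. Rounds are indexed by the possible active sets $S\subseteq\mathcal{C}$ with $|S|\ge 2$, each reachable by eliminating its complement. Each set is charged $|S|-1$ parameters, and the sum is evaluated with the same two binomial identities and the same handling of the $s=0$ and $s=1$ terms. The one place you diverge is the justification of $|S|-1$ per set, and that is the weak step. The paper argues by dimension: the support vector on $S$ lies in the $(|S|-1)$-dimensional simplex and, because redistribution is unrestricted, varies freely over it. It then converts these coordinates to margins against a fixed reference candidate $r\in S$, namely $t_j-t_r$ for $j\in S\setminus\{r\}$.

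Your margins (differences between candidates adjacent in the ordering, or the lowest candidate against the others) are instead chosen by the very ordering you are trying to determine. Taken as numbers, they do not identify the eliminee: tallies $(0.5,0.3,0.2)$ and $(0.2,0.3,0.5)$ have the same adjacent gaps. Taken as labeled margins, they presuppose the ordering, and as the ordering varies the pairs involved range over all $\binom{s}{2}$ pairs. Similarly, ``$s-1$ comparisons suffice to find the minimum'' counts adaptive queries in one run, not a fixed set of functions of $\boldsymbol{\pi}$.

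The fix is the paper's reference-candidate parametrization. Since $t_i-t_j=(t_i-t_r)-(t_j-t_r)$, the $s-1$ fixed margins determine every pairwise comparison and hence the eliminee. Your instinct to tie the lower bound to the elimination decision is sound; the paper's argument really bounds what is needed to characterize the whole support vector. It can be made precise as follows. Suppose a linear statistic $L$ of the tallies on $S$ had a kernel vector $v\neq 0$ with $\sum_k v_k=0$. Then $v_i\neq v_j$ for some $i,j$. Perturbing an interior point with $t_i=t_j<t_k$ for all other $k\in S$ by $\pm\epsilon v$ leaves $L$ unchanged but flips the eliminee between $i$ and $j$, so $L$ must have rank at least $s-1$.

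Finally, declining to claim independence across different $S$ is more cautious than the paper, which asserts free variation without proof, and it is unnecessary. By Falmagne's theorem, the choice systems induced by distributions over full rankings are exactly those satisfying the Block--Marschak inequalities. These all hold strictly at the uniform distribution, so the attainable systems of restricted shares contain an open subset of the affine space of dimension $2^{K-1}(K-2)+1$. That joint freedom is what gives the count a genuine minimality interpretation.
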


\begin{proof}
See the Appendix, Section \ref{proof:paramsLemma}. 
\end{proof}

Lemma \ref{lemma:numParams} means that five margins are needed to identify the winner of an RCV election with three candidates; 17 margins are needed with four candidates; and 49 are needed with five candidates. This is not in-and-of-itself problematic. Sample estimates for each of these margins are averages of independent and identically distributed measurements, and hence obey a Central Limit Theorem. Thus, we can estimate the covariance matrix and construct a Wald confidence ellipsoid that asymptotically attains a desired frequentist coverage rate \citep[see e.g.][]{van2000asymptotic}. 

In practice, however, this approach is poorly aligned to the inferential target of identifying the winner. The regions in which each of the $K$ candidates win are themselves unions of polyhedra, which together partition the space. In contested elections, much of the confidence ellipsoid's volume is expended to estimate parameters that are far from decision boundaries, but this makes the confidence set too wide in the directions of close margins, and they frequently cross decision boundaries. 

As a simple demonstration of this pathology, see Table \ref{tab:ak_wald_coverage}. We sample from the CVR of the Alaska House special, this time randomly drawing 10,000 pseudo-polls of various sizes. For each pseudo-poll, we compute a 95\% Wald confidence set and record its coverage rate for the five decision margins. In the second column, we report what proportion of these confidence ellipsoids overlap with all \emph{three} winner regions. In the third column, we record what proportion lie solely with the region corresponding to the true winner, Peltola. The final column reports the simultaneous coverage rate of the five margin parameters. Notably, while we achieve close-to-nominal simultaneous coverage of the parameters, winner identification is extremely poor: even at a sample size of $1,400$, almost 97\% of confidence ellipsoids overlap with all three winner regions. 

\begin{table}[ht]
\centering
\begin{tabular}{rrrrr}
\toprule
Sample size $n$
& All winner regions
& True winner only
& Other subsets
& Simultaneous coverage \\
\midrule
600  & 99.0\% & 0.0\% & 1.0\% & 94.3\% \\
800  & 98.5\% & 0.0\% & 1.5\% & 94.6\% \\
1000 & 97.7\% & 0.0\% & 2.3\% & 94.4\% \\
1,200 & 97.3\% & 0.0\% & 2.7\% & 94.8\% \\
1,400 & 96.8\% & 0.01\% & 3.2\% & 94.6\% \\
\bottomrule
\end{tabular}
\caption{Performance of the five-dimensional Wald confidence ellipsoid in 10,000 simulated polls, sampled randomly from the cast vote records of the the 2022 Alaska House special election. The confidence set approximately attains the nominal 95\% simultaneous coverage rate for the underlying parameter vector at all sample sizes, but it almost never identifies a unique winner. Even at $n=1,400$, the ellipsoid intersects all three winner regions in 97\% of simulations, and isolates the true winner alone in only 0.01\% of simulations.}
\label{tab:ak_wald_coverage}
\end{table}

Certainly, alternative frequentist testing procedures -- such as generalized likelihood ratio tests or intersection union tests \citep{silvapulle2011constrained} -- may exhibit higher power for the detection of winners in RCV polls. However, two fundamental issues make the problem challenging for frequentist testing procedures. First, the dimension of the problem grows rapidly with the number of candidates, even as the underlying decision margins may remain narrow. Second, the underlying geometry of the problem is formidable: winner regions are not nested, and comprise complex unions of polyhedral regions. 

\section{Proposed Bayesian Uncertainty Quantification} \label{sec:method}

Given the challenges identified in Section \ref{sec:challenges}, we advocate a Bayesian approach that focuses directly on the quantity of interest: the probability that each candidate wins the election. Rather than conducting inference on somewhat abstruse margins in theoretical elimination rounds, we model uncertainty in the distribution of ballot rankings and ask how often each candidate emerges as the winner. The resulting posterior win probabilities naturally incorporate sampling error, path dependency, and other features of RCV elections.

\subsection{Preliminaries}

We now reformulate the RCV problem in terms of the ballot-ordering distribution $\boldsymbol{\pi}$. Recall, from Section \ref{sec:assumptions}, the responses $y_i$ of voters $i = 1, \dots, N$ in the population will take values in $\mathcal{P}$, the finite set of all possible ballot orderings among the set of candidates $\mathcal{C}$. Moreover, we define $\boldsymbol{\pi} = (\pi_1,\ldots,\pi_{|\mathcal P|})$ to be the population-level probability that each voter submits a ballot of each type, i.e.
\[ \pi_j = \Pr(y_i = p_j). \]
Lemma \ref{lemma:numParams} gives the minimal set of parameters that are necessary to determine the RCV winner. However, observe that the entries of $\boldsymbol{\pi}$ are also sufficient to determine the winner -- this is simply a less parsimonious representation of the data. Moreover, observe that for units $i = 1, \dots, n$ in a survey, we can compute the sample analogue $\boldsymbol{\hat \pi} = (\hat \pi_1,\ldots,\hat \pi_{|\mathcal P|})$ where
\[ n_j = \sum_{i = 1}^n \mathbbm{1}(y_i = p_j) \qquad \text{and} \qquad \hat \pi_j = \frac{n_j}{n}. \]

Because the survey samples $y_i, i = 1, \dots, n$ are assumed independent and identically distributed, it follows immediately that the count vector $(n_1,\ldots,n_{|\mathcal P|})$ follows a multinomial distribution, with parameters
\[ (n_1,\ldots,n_{|\mathcal P|}) \sim \text{Multinomial}(n, \boldsymbol \pi). \]
Lastly, we recall that, under this model, $(n_1,\ldots,n_{|\mathcal P|})$ is a minimal sufficient statistic for $\boldsymbol \pi$. Because the RCV winner is a deterministic function of $\boldsymbol \pi$, it follows immediately that this statistic contains all information in the sample relevant to determining the winner.

\subsection{Bayesian Inference on RCV Polls}\label{sec:coreMethod}

In this section, we assume familiarity with standard Bayesian concepts such as a prior distribution, likelihood function, posterior distribution, and conjugacy. For a brief review, see the Appendix, Section \ref{sec:bayes}; for a deeper discussion, see e.g.  \cite{bolstad_introduction_2007}. 

Our goal via Bayesian inference is to estimate the posterior probability that each candidate wins the election, conditional on the survey data. The posterior distribution can be sampled to obtain a simulated electorate on which an RCV winner can be computed. Over repeated sampling from the posterior, we can obtain proportions of election victories for each candidate. These proportions can be interpreted as each candidate's conditional probability of winning. 

To obtain the posterior, we make use of a well-known conjugacy relationship: that of the Dirichlet and the multinomial distribution. Per the discussion in the prior section, we know that the count vector obeys a multinomial distribution, with associated density
\[ f(n_1\ldots, n_{|\mathcal{P}|} \mid \pi_1, \ldots, \pi_{|\mathcal{P}|} ) = \frac{n!}{\prod_{j = 1}^{|\mathcal{P}|}n_j!}\prod_{j = 1}^{|\mathcal{P}|} \pi_j^{n_j}. \]

The Dirichlet distribution -- a generalization of the Beta distribution -- is a conjugate prior for a multinomial likelihood. If we assume the prior
\begin{align*}
\boldsymbol{\pi} \sim \mathrm{Dir}(\boldsymbol{\alpha}) \qquad \text{where} \qquad \boldsymbol{\alpha} =
(\alpha_1,\ldots,\alpha_{|\mathcal P|}),
\end{align*}
then the density of $\boldsymbol{\pi}$ is
\begin{align*}
g(\pi_1,\ldots,\pi_{|\mathcal P|}; \boldsymbol{\alpha}) = \frac{1}{B(\boldsymbol{\alpha})} \prod_{j=1}^{|\mathcal P|} \pi_j^{\alpha_j-1}, \qquad \text{where}  \qquad  B(\boldsymbol{\alpha}) = \frac{\prod_{j=1}^{|\mathcal P|}\Gamma(\alpha_j)} {\Gamma\!\left(\sum_{j=1}^{|\mathcal P|}\alpha_j\right)}
\end{align*}
is the multivariate Beta function. If we suppose $(n_1,\ldots,n_{|\mathcal P|}) \sim \text{Multinomial}(n, \boldsymbol \pi)$, then conjugacy gives us 
\[ \boldsymbol{\pi}  \mid (n_1,\ldots,n_{|\mathcal P|}) \sim \mathrm{Dir}\left(\alpha_1 + n_1, \dots, \alpha_{|\mathcal{P}|} + n_{|\mathcal{P}|}\right).\] 

Using this relationship, we can obtain a straightforward posterior distribution on the ballot type probabilities, without having to resort to computational methods. This method requires a prior distribution; we suggest as a starting place the flat, uninformative prior $\boldsymbol \pi \sim \text{Dir}(1, \dots, 1)$, indicating that all ballot types are equally likely. However, preexisting polling data can replace the flat prior if available. 

A posterior distribution on $\boldsymbol \pi$ does not immediately translate to candidate victory probabilities. However, we can sample repeatedly from the posterior distribution to obtain samples $\boldsymbol{\tilde \pi^{(1)}}, \dots, \boldsymbol{\tilde \pi^{(B)}}$, for a large value of $B$. Each sampled value $\boldsymbol{\tilde \pi^{(\ell)}}$ is a simplex-restricted, $|\mathcal{P}|$-dimensional vector of proportions for each of the unique ballot rankings. Hence, each such value can be interpreted as an electorate simulated from the posterior, where the simulated electorate data has been expressed in terms of the proportions of ballot types. Crucially, the instant runoff algorithm can be run straightforwardly on any of the sampled $\boldsymbol{\tilde \pi^{(\ell)}}$ values, and each will identify a winning candidate. 

Hence, repeated sampling from the Dirichlet posterior yields a distribution of victories by candidate. This simulates the effect of random variation in the distribution of ballots on election outcomes. The proportion of simulated electorates in which a candidate wins the election is interpreted to be the probability that the candidate wins the election according to the poll.

\subsection{Interpretation of the Posterior Distribution and Estimated Probabilities}

The posterior win probabilities constitute the primary uncertainty measure produced by this method. Unlike margins of error or confidence regions on individual elimination margins, these probabilities directly quantify uncertainty about the identity of the election winner. This distinction is particularly important in RCV elections, in which uncertainty about elimination orders often co-occurs with limited uncertainty about the ultimate election winner. 

We advocate reporting the estimated win probabilities alongside the point estimated winner from the poll. For example, a candidate with an estimated victory probability exceeding 95\% can reasonably be viewed as a highly likely winner, while candidates with victory probabilities below 5\% can be ruled out as plausible winners. More generally, however, the probabilities provide a continuous measure of uncertainty rather than forcing pollsters or readers into binary conclusions. A candidate with a 70\% estimated probability of victory is neither certain to win nor involved in a completely uncertain race. By reporting the full distribution of victory probabilities, pollsters can communicate uncertainty in an interpretable manner.

\subsection{Reducing the Set of Possible Rankings} \label{sec:pruning_section}
One practical challenge faced when simulating an RCV electorate is the large size of the sample universe. Given a pool of $K$ candidates and allowing voters to rank up to $R$ candidates so that $R \leq K$, the number of unique potential rankings (UPR) is given by 
\[ \sum_{r=1}^{R} \frac{K!}{(K-r)!}. \]

Thus, as the number of candidates increases, the number of unique rankings increases super-exponentially. This does not pose a problem in some jurisdictions, such as Alaska, where only the top four vote-getters in a nonpartisan primary move on to a ranked choice voting general election. But for races with a large number of candidates, a typical poll will be a very sparse sample of the set $\mathcal{P}$ of unique ballot rankings. 

For example, in the 2021 NYC Democratic mayoral primary, there were 13 candidates and voters were allowed to rank up to five, so $|\mathcal{P}| = 173,485$. Only about 43\% of these orderings -- 75,153 to be precise -- appeared in the cast vote records. However, the Data for Progress poll, which had a relatively large sample size of $n = 1,354$, included only 894 unique ballot rankings. Because the posterior distribution is formed based on the observed ballot frequencies in the poll, this can pose a problem for inference. 

One solution is to include all possible UPR in the prior and implement a variation of additive smoothing \citep{Jeffreys_1946} by setting the prior probabilities for the unobserved entries of $\mathcal{P}$ to be half the size of the regular prior probability. The posterior distribution will assign a low, but nonzero, probability to unseen ballots. However, a major drawback of this approach, visible from the cast vote records in New York, is that many possible ballots do not appear at all in the true electorate's preferences. Therefore, by including all unique possible rankings, we incorrectly include rankings that are not present in the electorate. 

An alternative approach involves first recognizing that many of the entries in $\mathcal{P}$ can be ignored or canonicalized. This is closely related to our observation in Section \ref{sec:standardMethods} that many kinds of variability in ballot preferences do not affect the eventual winner of the election. We formalize this idea via the following lemma. 

\begin{lemma}[Ballot Pruning]\label{pruneThm}
Suppose we have an RCV election involving candidate set $\mathcal{C}$, and voters may rank up to $R \leq K = |\mathcal{C}|$ candidates. Suppose also that an oracle provides the true final set of $\mathcal{C}^\star_m$ candidates, where $|\mathcal{C}^\star_m| = m$ and the entries in $\mathcal{C}_m^\star$ are precisely those candidates that remain after the first $K - m + 1$ elimination rounds. If each ballot is ``pruned" to replace candidates in $\mathcal{C} \setminus \mathcal{C}_m^\star$ with blanks, then RCV run on the pruned ballots will yield the same winner as RCV run on the unpruned set of ballot rankings. In fact, the round-by-round vote tallies for each candidate in the pruned RCV results will be identical to the vote tallies for the final $m-1$ rounds of the unpruned RCV results.  
\end{lemma}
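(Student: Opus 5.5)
The key observation is that the RCV tally in any round depends on a ballot only through its \emph{highest-ranked surviving candidate}. So I would compare the unpruned run, once it reaches the surviving set $\mathcal{C}^\star_m$, with the pruned run from its start. I need to be careful with indexing. With $K$ candidates and no early majority, there are $K-1$ elimination rounds. Reading the statement as saying that after the preceding eliminations exactly the $m$ candidates of $\mathcal{C}^\star_m$ remain, the unpruned process has $m-1$ further rounds, which matches the phrase ``final $m-1$ rounds.'' The pruned election begins with exactly these $m$ candidates, since every other candidate has been blanked, so it too has $m-1$ elimination rounds. I will therefore index the rounds of both processes by the current surviving set $S \subseteq \mathcal{C}^\star_m$.

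\textbf{Step 1.} For a ballot $b$ and a set $S$ of surviving candidates, let $\mathrm{top}_S(b)$ be the highest-ranked candidate of $b$ lying in $S$, or ``exhausted'' if none exists. The round-$S$ tally for candidate $c$ is the number (or weight) of ballots with $\mathrm{top}_S(b) = c$. The active-ballot total is the number of ballots with $\mathrm{top}_S(b) \neq$ exhausted.

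\textbf{Step 2.} Let $\mathrm{pr}(b)$ denote the pruned ballot, in which candidates outside $\mathcal{C}^\star_m$ are replaced by blanks. I will show that for every $S \subseteq \mathcal{C}^\star_m$ we have $\mathrm{top}_S(\mathrm{pr}(b)) = \mathrm{top}_S(b)$. Pruning only removes entries outside $\mathcal{C}^\star_m \supseteq S$ and preserves the relative order of the remaining ones, so the first $S$-entry is unchanged. This holds also when $b$ has no $S$-entry: both sides are then exhausted, including pruned ballots that become entirely blank. Hence the candidate tallies and the active-ballot counts coincide for every $S \subseteq \mathcal{C}^\star_m$, and so do the majority thresholds.

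\textbf{Step 3.} I will then induct on the rounds. The base case is that both processes currently have surviving set $\mathcal{C}^\star_m$: the unpruned one by the oracle hypothesis, the pruned one by construction. If both share surviving set $S$, Step 2 gives identical tallies. Hence the majority-check outcome is identical and the eliminated candidate is identical, given the same deterministic tie-breaking rule or the no-ties assumption. So both move to the same $S' \subset S$. Induction yields equality of every tally in the final $m-1$ rounds and the same winner.

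I expect the main obstacle to be bookkeeping rather than mathematics. The off-by-one in ``first $K-m+1$ rounds'' must be reconciled with ``final $m-1$ rounds,'' and I would state explicitly the indexing convention that makes these consistent. I also need to handle two edge cases. First, a majority could be reached before the final round: the identical-threshold argument covers this, since both processes stop at the same $S$. Second, exhausted ballots must be treated identically in the denominator, which Step 2 already guarantees.
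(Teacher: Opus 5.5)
Your proof is correct and rests on the same idea as the paper's: once the candidates outside $\mathcal{C}_m^\star$ have been eliminated, each unpruned ballot counts exactly as its pruned version does, whatever order the minor candidates were removed in. The paper phrases this as sequential blanking of eliminated candidates coinciding with batch blanking, and your $\mathrm{top}_S$ formulation makes explicit the round-by-round induction, the matching of active-ballot totals and majority thresholds, and the round-indexing convention that the paper's proof leaves implicit.
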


\begin{proof}
See the Appendix, Section \ref{sec:pruneProof}. 
\end{proof}

The ballot pruning lemma offers a way to reduce the number of relevant ballot rankings. It says that the order that minor candidates are eliminated does not matter if we know which candidates will remain. In other words, if we know who the final $m$ candidates are, we can collapse the ballots to ignore the other candidates; only the relative rankings of the major candidates affect the election outcomes. By considering only the candidates relevant to the election outcome, we can ensure that the sample is representative of the electorate's preferences on these major candidates.

The challenge lies in identifying the appropriate pruning set. If pollsters use an incorrect set of the final $m$ candidates, they may inadvertently distort the poll results, even if the omitted candidates are not actually competitive. Fortunately, our simulations suggest that this concern is limited in practice. Using the CVR data from the 2021 NYC Democratic mayoral primary, we drew 500 simulated polls at each sample size between $n = 100$ and $n = 1,000$, and identified the final three, four, and five candidates remaining under RCV tabulation in each sample. 

We recall from Figure \ref{plot:elimOrders} the unique dynamics of this race. The final five candidates remaining in the tabulations (in elimination order) were: Scott Stringer, Andrew Yang, Maya Wiley, Kathryn Garcia, and Eric Adams. At a sample size of $n = 1,354$, eliminations before Andrew Yang had no effect on the ultimate election winner. This dynamic manifests clearly in Figure \ref{plot:prune_cvr}, which reports the proportion of simulated polls for which the estimated set of final $m$ candidates matched the true set from the election, for $m \in \{3,4,5\}$. The horizontal line denotes a 99\% agreement threshold. The separation of Yang, Wiley,  Garcia, and Adams from the remainder of the candidates is evident by a sample size of $n = 400$, above which \emph{every} simulated poll identifies the top three and top four candidates correctly. For sample sizes of approximately $n = 600$ or greater, the final five candidates are also identified correctly in more than 99\% of simulations. This suggests that, in many practical settings, the uncertainty associated with pruning is substantially smaller than the uncertainty associated with the eventual election outcome itself.

\begin{figure}
    \centering
    \includegraphics[width=1\linewidth]{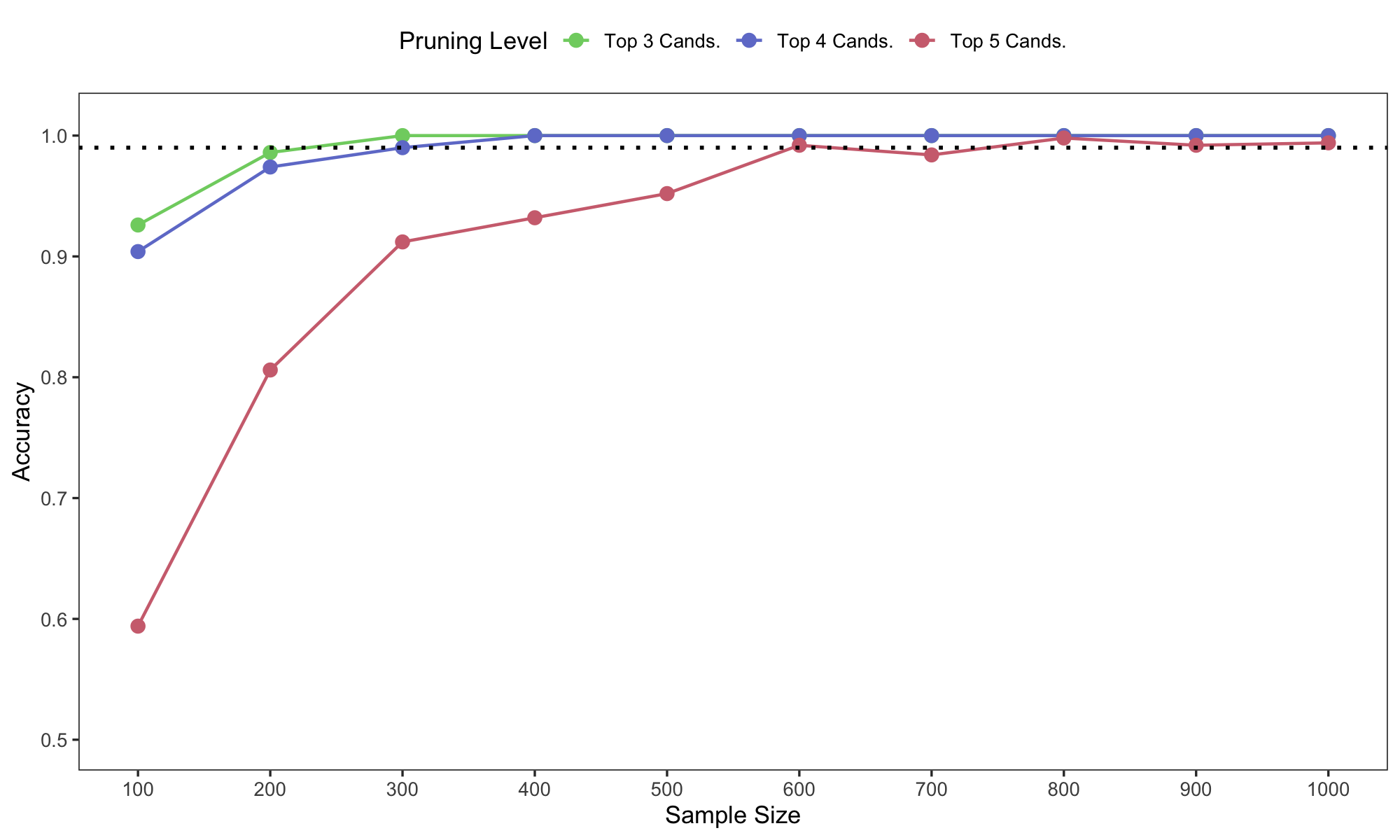}
\caption{Accuracy of candidate-set pruning under repeated sampling from the cast vote record of the 2021 New York City Democratic mayoral primary. The figure shows the probability that a poll correctly identifies the final $m$ candidates remaining under RCV tabulation. Even at moderate sample sizes, the candidates relevant to the election outcome are recovered with high reliability. The dashed horizontal line denotes a 99\% agreement threshold.}    
\label{plot:prune_cvr}
\end{figure}

In practice, of course, we cannot sample from the cast vote record prior to an election. However, we can bootstrap from the polling results to assess the stability of the pruning set. The following generalized method is recommended to decide the size of a pruning set with a high degree of confidence:

\begin{enumerate}
    \item Compute $B$ bootstrap samples on the poll, where $B$ is at least several hundred. 
    \item For each bootstrap, compute the order of eliminated candidates.% This is equivalent to finding a pruning set of every size.
    \item For each size of pruning sets, find the percent of the $B$ bootstrap samples that have identical pruning sets. 
    \item Choose the smallest sized pruning set with over a 99\% match rate across the samples $B$. 
\end{enumerate}

This process yields the smallest pruning set that is likely correct for both the poll results and the true election results. 

\subsection{Considerations about the Prior Distribution}\label{sec:poll_size}

Another consideration when employing Bayesian uncertainty quantification is the choice of prior. Per the discussion in Section \ref{sec:coreMethod}, a flat, uninformative prior over all the entries in $\mathcal{P}$ is often a reasonable starting place. 

However, using such a prior can inadvertently advantage candidates in cases where one candidate's supporters are more inclined to use expressive ballots. For example, consider the extreme hypothetical where one candidate's supporters always only rank that single candidate, while another candidate's supporters also rank other candidates on their ballots. A flat prior assigns ballot weights evenly to each unique ballot ranking. The first candidate will only appear on one unique ballot: the lone ballot ranking only him. The other candidate will appear on every combination of ballots ranking them and the other candidates. Even if a similar number of voters rank both candidates, the prior indirectly favors the latter candidate because the aggregate prior weight for ballots including the latter candidate is much larger than the single weight for the ballot ranking the former candidate. If the size of the prior distribution is relatively large compared to the size of the likelihood, the uninformative prior can dominate the posterior.

There are a few ways to mitigate this problem. One option is to downplay the effect of the prior by limiting the effective sample size associated with the prior. Under the conjugate update rules, where the posterior follows a $\text{Dir}(\alpha_1 + n_1, \ldots, \alpha_{|\mathcal{P}|} + n_{|\mathcal{P}|})$ distribution given a poll of sample size $n$, the effective sample size of the prior is given by
\[ n_{\text{eff}} = \sum_{j =1}^{|\mathcal{P}|} \alpha_j\]
One practical rule-of-thumb is to limit $n_{\text{eff}}$ to be no more than 15\% of the sample size \citep{bolstad_introduction_2007}. For a flat prior, this would simply mean 
\[ \alpha_j = \frac{0.15\,n}{|\mathcal P|} \qquad j=1,\ldots,|\mathcal P|.\] 

Moreover, when pruning yields a relatively small set of top candidates, the problem is effectively solved because the cardinality of $|\mathcal{P}|$ is dramatically reduced. For example, various ballots that rank only one major candidate and some other minor candidates will be collapsed to one unique ballot type after pruning. Reducing the number of unique ballot types that appear in the poll while keeping the sample size of the poll constant naturally limits $n_{\text{eff}}$ given a large enough sample size. The ideal solution is to first prune, then assess the effective sample size of the prior and, if necessary, reduce it to be no more than 15\% of the poll's sample size.

\subsection{Complete Bayesian Simulation Pipeline}

Figure \ref{fig:flowchart} visualizes the complete pipeline. Given a poll of an RCV election,
\begin{enumerate}
    \item Run a bootstrap to determine the pruning list with 99\% + confidence and prune the ballots. 
    \item Aggregate the sample over the unique ballot types $\mathcal{P}$
    \item Impose the prior, which will typically be a flat prior equal to one for each possible ballot ordering in $\mathcal{P}$. If $|\mathcal{P}|$ is large, follow the guidance in Section \ref{sec:poll_size}. If prior polling exists, consider using it to set the prior. 
    \item Compute the posterior Dirichlet distribution using the prior and pruned ballots. 
    \item Sample repeatedly (i.e. $B = 500$ or more) times from the posterior Dirichlet distribution. Each draw is a vector of conditional probabilities for all unique ballot types from the poll.
    \item Treating the conditional probabilities as the distribution of ballot types for a simulated electorate, run RCV on each posterior draw.
    \item Report the average number of wins for each candidate over the $B$ samples as the candidate's probability of winning. 
\end{enumerate}

\begin{figure}[!htbp]
    \centering
    \includegraphics[width=1.1\linewidth]{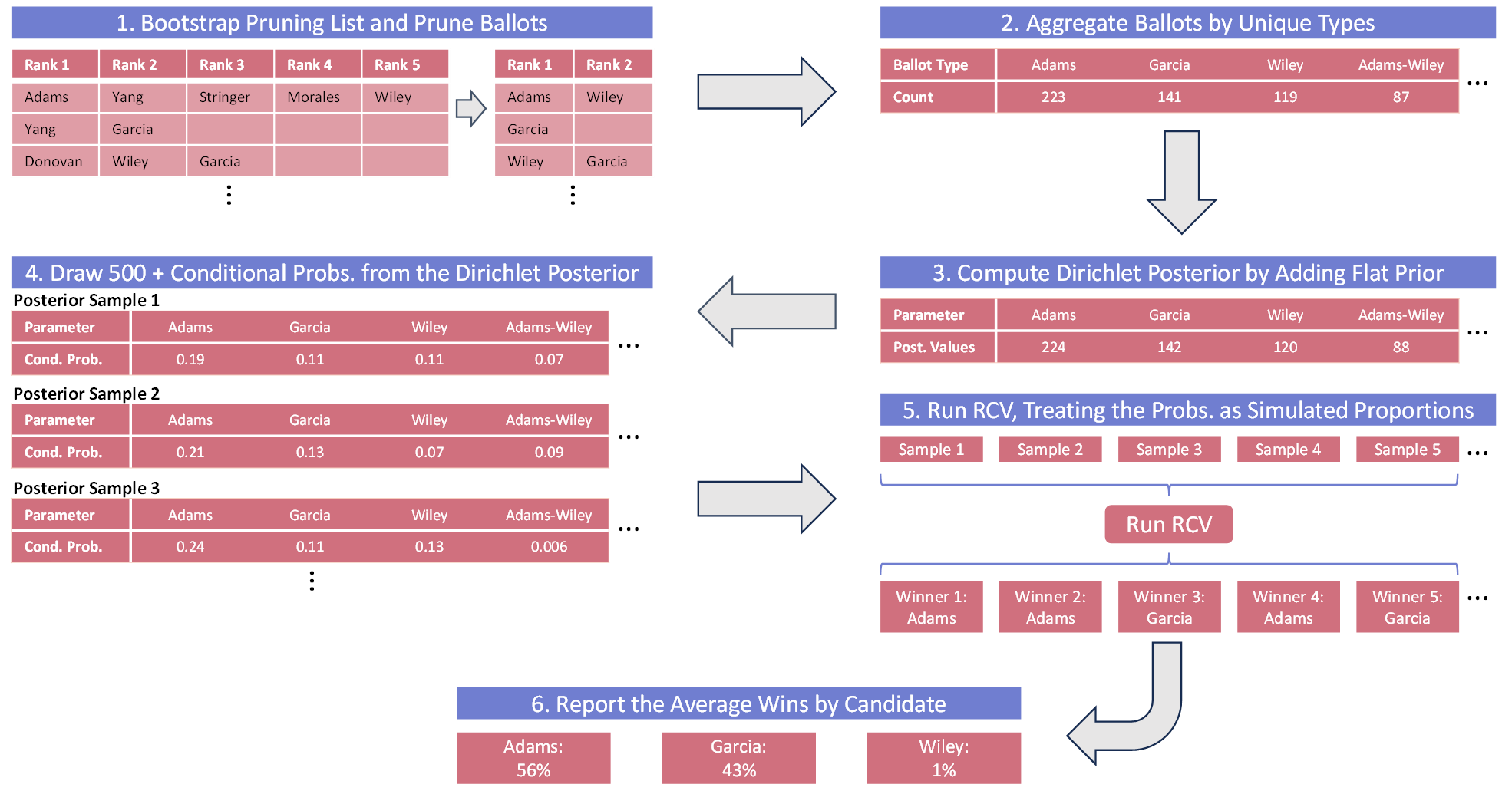}
    \caption{Bayesian Simulation Method Pipeline}
    \label{fig:flowchart}
\end{figure}

\section{Applications to Election Polling Data}
\label{sec:app}

In this section, we apply the proposed Bayesian uncertainty quantification procedure to a real pre-election poll: the final Data for Progress poll of the 2021 Democratic mayoral primary. We also apply the procedure to simulated, unbiased polls drawn from cast vote records. We demonstrate that the method produces interpretable uncertainty summaries for ranked choice voting elections. 

% goal is to demonstrate that the method produces interpretable uncertainty summaries for ranked choice voting elections: posterior probabilities of candidate victory and posterior probabilities of consequential elimination pathways.

\subsection{Reanalyzing the Data for Progress NYC Poll}
\label{subsec:dfp_poll}

We begin with the final Data for Progress poll of the 2021 New York City Democratic mayoral primary. As discussed in Section~\ref{sec:standardMethods}, this poll identified Eric Adams, Kathryn Garcia, and Maya Wiley as the three leading candidates. In the penultimate round, Adams led the poll ($38\%$ support) and Garcia and Wiley were nearly tied ($31\%$ support each). The pollsters then ran scenarios eliminating either Wiley or Garcia. When Garcia was eliminated, Adams won the final round against Wiley at a margin of $54\%$ to $46\%$. When Wiley was eliminated, Garcia clinched a narrow victory over Adams that was well within the poll's margin of error. 

We first use the bootstrap pruning procedure from Section~\ref{sec:pruning_section} to assess the stability of the relevant candidate set. Table~\ref{tab:dfp_pruning} reports the bootstrap agreement rate for pruning sets of different sizes. We see that the poll recovers the same dynamic we saw when sampling from the CVR: the stable positions of Garcia, Adams, and Wiley in the top three, and the stable position of Yang in fourth. Using our heuristic level of 99\%, we can safely prune the poll to just the top three candidates: Adams, Garcia, and Wiley.

\begin{table}[ht]
\centering
\begin{tabular}{cr}
\toprule
\shortstack{Pruning\\Set Size} &
\shortstack{Bootstrap\\Agreement} \\
\midrule
2 & 48.4\% \\
3 & 99.7\% \\
4 & 100.0\% \\
5 & 96.2\% \\
6 & 69.3\% \\
7 & 59.8\% \\
8 & 58.2\% \\
9 & 100.0\% \\
\bottomrule
\end{tabular}
\caption{Percentage of bootstrap samples from the Data for Progress poll yielding the same pruning set at each size. Agreement exceeds 99\% for pruning sets of size three and four, indicating that the set of viable candidates is highly stable.}
\label{tab:dfp_pruning}
\end{table}

The Data for Progress poll responses were weighted by age, gender, education, race, borough, and voting history to reflect the likely electorate. Because the units $i$ have different weights $w_i$, we make a small adaptation to the Bayesian uncertainty from Section~\ref{sec:bayes}. We can define 
\[ n_j^\star = \sum_{i = 1}^n \mathbbm{1}(y_i = p_j) \cdot w_i \qquad \text{for} \qquad j = 1, \dots, |\mathcal{P}|\] 
and then write a pseudo-posterior obtained by replacing counts with weighted counts,  
\[ \boldsymbol{\pi}  \mid (n_1^\star,\ldots,n_{|\mathcal P|}^\star) \sim \mathrm{Dir}\left(\alpha_1 + n_1^\star, \dots, \alpha_{|\mathcal{P}|} + n_{|\mathcal{P}|}^\star\right).\] 
Drawing 500 samples from this posterior distribution, we obtain the posterior win probabilities reported in Table~\ref{tab:dfp_posterior_winner}. 

\begin{table}[h]
\centering
\begin{tabular}{lr}
\toprule
Candidate & Win probability \\
\midrule
Eric Adams & 56.4\% \\
Kathryn Garcia & 42.8\% \\
Maya Wiley & 0.8\% \\
\bottomrule
\end{tabular}
\caption{Posterior win probabilities from the Data for Progress poll after pruning to the top three candidates.}
\label{tab:dfp_posterior_winner}
\end{table}

These probabilities provide a direct summary of election uncertainty. They essentially rule out any possibility of a Maya Wiley victory, which makes sense in context: even if Wiley were to survive to the final round, it was clear that Adams had significantly greater support in a one-on-one matchup. The probabilities also indicate that the final victor is contested between Adams and Garcia -- a highly reasonable conclusion in an election in which Adams defeated Garcia by only 0.8\% in the final round. 

As a gut check, we also sample $25,000$ unbiased polls from the cast vote record, each of size $n = 1,354$, and compute the winner within each poll. We then assess the win frequencies across the unbiased polls, obtaining win ``probabilities" of 68\% for Adams, 32\% for Garcia, with a negligible ($< 0.05$\%) set of samples favoring Wiley. While these numbers do not perfectly align with our posterior probabilities, they are reasonably close and tell the same qualitative story: a two-person race, in which Adams is at most modestly favored. 

\subsection{Weighted Polls Sampled from Cast Vote Records}\label{sec:wtdPolls}

We next evaluate the method using polls randomly sampled from the cast vote records in New York City and Alaska. However, unlike in prior sections of this manuscript, we apply geographic weighting to each of the sampled polls. This is done so that we can more realistically approximate the behavior of real pollsters, who typically weight surveys to the proportions of an expected electorate across dimensions such as partisanship, age, race, and geography. We do not have access to demographic individual-level features in the cast vote record (as such data would risk privacy protections and thus the secret ballot). However, we do have access to geographic information. 

Using this information, we weight the New York City election data by borough. Targets are based on the number of Democratic votes cast in the 2020 presidential election by borough, which yield proportions of 15\% for the Bronx, 30\% for Brooklyn, 26\% for Manhattan, 25\% for Queens, and 4\% for Staten Island. This is a reasonable heuristic, but differs slightly from the true proportions of ballots actually cast in the primary (11\% in the Bronx, 32\% Brooklyn, 27\% Manhattan, 35\% Queens, 4\% Staten Island). Hence, while the underlying samples are unbiased, we induce a small amount of bias via the weighting while simultaneously reducing variance significantly. 

Similarly, we divide Alaska into six geographic regions, based on the geographic divisions used in a 2025 Data for Progress poll \citep{DataForProgress2025PeltolaGov}. We assign weights based on votes cast in the 2020 presidential election, yielding proportions of 39\% for Anchorage, 13\% for the Fairbanks North Star Borough, 9\% for the Kenai Peninsula, 16\% for the Matanuska-Susitna Valley, 12\% for Southeast Alaska, and 11\% for the rest of the state. Again, these proportions differ slightly from what was observed in the actual election (41\%, 12\%, 11\%, 16\%, 11\%, and 9\%, respectively), but represent a reasonable pollster heuristic which induces a small amount of bias while constraining variance. 

As in the prior section, we prune the New York City election to only the top three candidates: Adams, Garcia, and Wiley. The Alaska election only involved three candidates and hence does not require pruning. For each election and sample size, we draw $1,000$ random samples from the cast vote records at each sample size $500, 750, \dots, 2,500$. For each draw, we weight the samples according to the geographic targets, and then compute Dirichlet posterior using our Bayesian method. We then draw $1,000$ samples from the posterior and compute the winner in each sample, allowing us to assign win probabilities -- e.g. Peltola 71\%, Palin 15\%, Begich 14\% -- for each draw. 

Results are visualized in Figure \ref{fig:posterior_win_probs}. In each plot, we use a solid line to show the median win probability across the $1,000$ CVR draws. The darker bands show the interquartile range for win probabilities across the draws. The lighter bands show the middle 80\% range. In each election, we can see that posterior mass moves toward the election winner as the sample size increases, while the width of possible posterior win probabilities narrows. While our simulations show that an ``unlucky" sample could plausibly yield little discrimination between the candidate win probabilities at large survey sizes, we note that additional demographic weighting -- which we cannot simulate from the CVR data -- would presumably narrow the uncertainty bands further. 

\begin{figure}[htbp]
\centering
\includegraphics[width=0.8\textwidth]{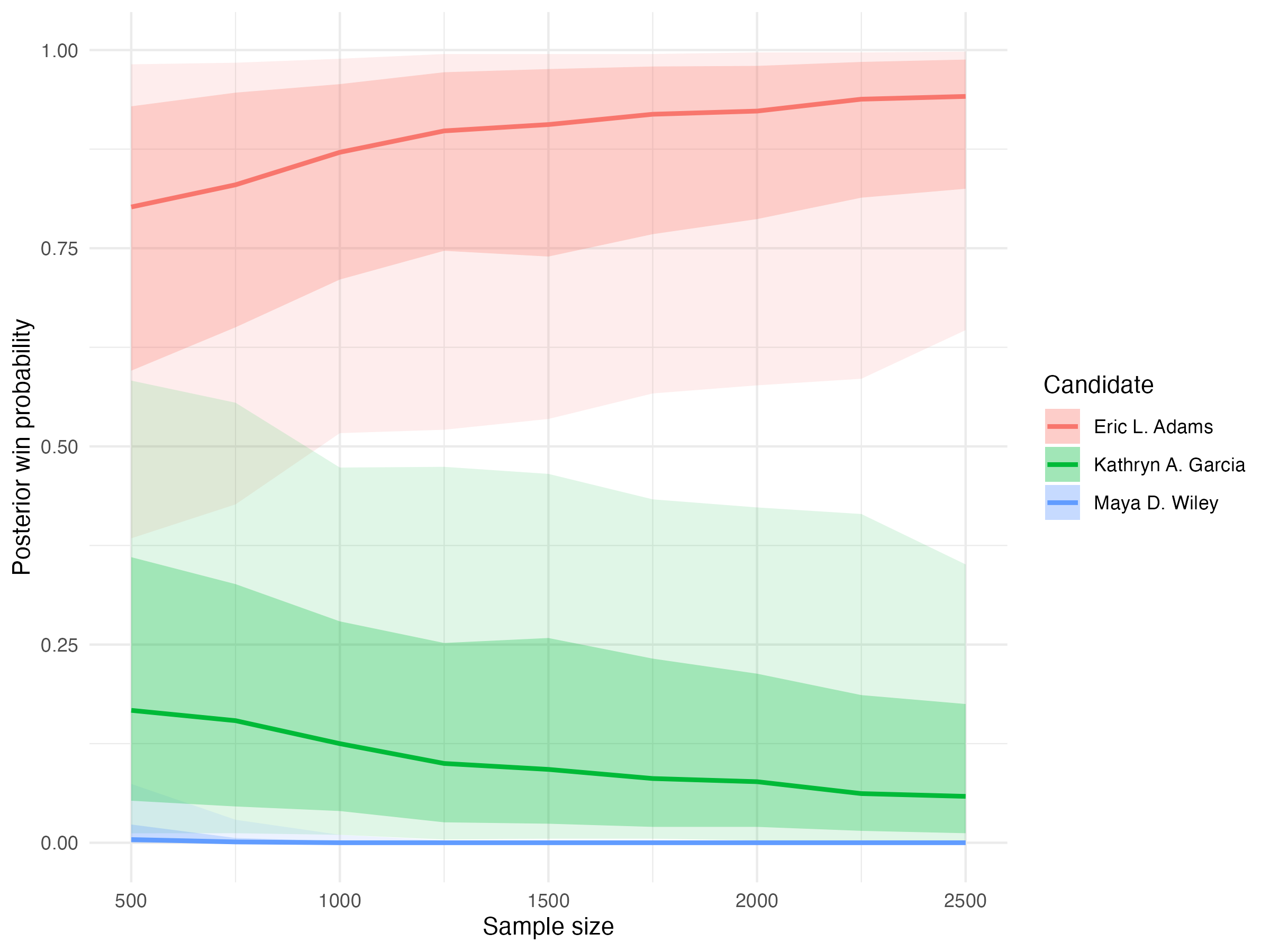}
\vspace{0.5em}
\includegraphics[width=0.8\textwidth]{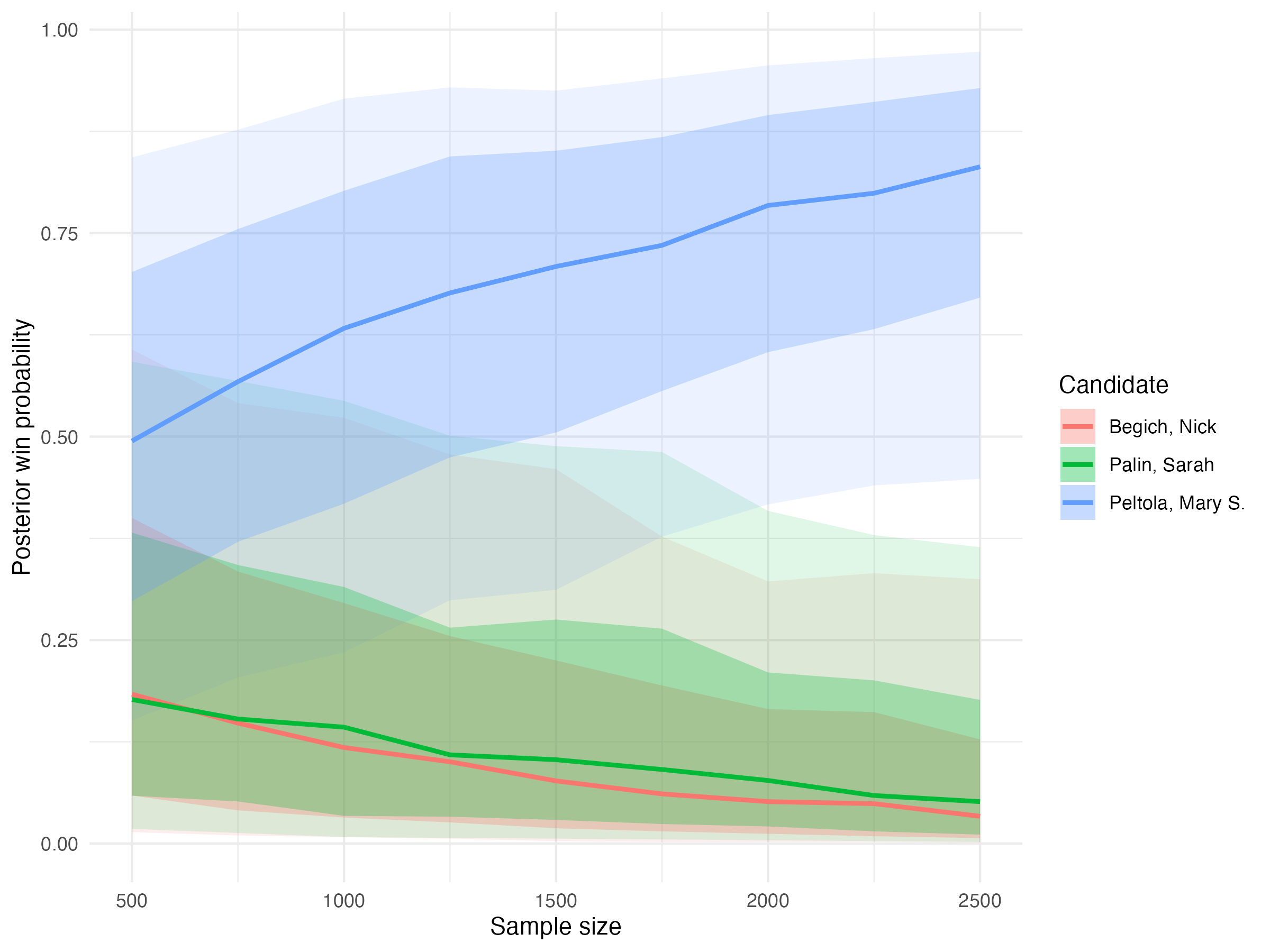}
\caption{Posterior probability of victory as a function of sample size under the Bayesian polling framework. The top panel shows results for the 2021 New York City Democratic mayoral primary, while the bottom panel shows results for the 2022 Alaska special election for U.S. House. Solid lines correspond to the median posterior win probabilities (derived from $1,000$ posterior samples) across $1,000$ random samples from the cast vote records of each election. Darker bands correspond to the IQR of win probabilities, and lighter bands to the middle 80\% range. }
\label{fig:posterior_win_probs}
\end{figure}

\section{Discussion} \label{sec:discussion}

In this manuscript, we have considered uncertainty quantification in the increasingly prevalent setting of ranked choice voting polls. The path-dependent nature of RCV eliminations induces various challenges in assessing who will win an RCV election and how confident we can be in that assessment. We argue that RCV polls are fraught settings for point estimation, because an incorrect winner can actually be favored in a plurality of random samples of a given sample size. Moreover, we note that standard approaches based in computing a poll-level margin-of-error are inadequate, as they only allow us to assess that an estimated elimination pathway is fragile -- not that a given winner is plausible or implausible. We also explore the geometry of the RCV problem, finding that winner assessment involves determining whether a high-dimensional parameter crosses any decision boundaries defined by the unions of polyhedral regions. Because only a few margins are typically determinative, even in high dimensions, standard procedures exhibit low power in assessing RCV winners. 

These deficiencies lead us to seek a different uncertainty measure. A Bayesian approach, using the multinomial-Dirichlet conjugate pair, offers a natural solution because it directly targets the parameter of interest -- the RCV winner -- and is computationally tractable. The approach models the distribution of unique ballot types among an electorate. By repeatedly sampling over the posterior, we compute the average win proportions for candidates across many simulated electorates. These proportions can be interpreted as probability statements about each candidate's election victory likelihood. 

The estimated win probabilities have many natural advantages. They naturally ``bake in" the path-dependent nature of RCV without requiring specialized machinery to handle the underlying geometry. Moreover, unlike traditional MoEs, which only allow binary conclusions on the statistical significance of a poll result, the estimated win percentages do not constrain us to 95\% CIs. Win probabilities offer additional insight on the relative likelihood of candidate victories, even if their leads fall within the MoE. By reporting the estimated win probabilities as opposed to whether the leads exceed a threshold, we offer readers greater context to interpret the poll. However, it is important to communicate to the general electorate that high probabilities are not assurances of victory.

Assessments on real datasets underscore the feasibility and interpretability of this method. For example, in analyzing the final Data for Progress poll of the 2021 New York City Democratic Mayoral Primary, we found estimated win probabilities of $56\%$ for Eric Adams and $43\%$ for Kathryn Garcia. This result closely mirrored the race's true uncertainty: the final-round margin came down to just 0.8\%, and the AP did not call the race for two weeks after the election. A significant advantage of the Bayesian approach is that simple win probabilities can be reported and understood by voters at a glance without requiring additional context.

There are many future extensions to this work. While we find favorable results in two heavily contested and hard-to-poll elections, future work might consider how this approach fares in other settings, such as RCV elections that involve incumbents or in which there is a clearer frontrunner. We have also not considered how this approach might generalize to the case of proportional ranked choice voting elections -- such as those used to elect the Portland, Oregon and Cambridge, Massachusetts city councils --  in which ranked ballots are used to elect multiple candidates, rather than a single winner.  Finally, while the Bayesian framework is most natural to model the distribution of ballots and to incorporate prior polling information, exploratory analysis shows that computational methods such as bootstrapping often yield similar results. Further investigation into computational methods, or specialized frequentist testing procedures, might also help us to address the complexities of uncertainty quantification in ranked choice voting polls. 

\bibliographystyle{apalike2}
\bibliography{rcv_clean}

\newpage

\appendix

\section{Classical Frequentist Inference for Two-Candidate FPTP Polls}
\label{app:fptp}

For completeness, we review the technical details of the standard frequentist framework used to quantify uncertainty in polls of two-candidate first-past-the-post (FPTP) elections.

Let $\pi_A$ and $\pi_B$ denote the population support levels for candidates $A$ and $B$, and define the election margin
\[ d_{AB} = \pi_A - \pi_B = 2\pi_A - 1. \]
Among survey respondents, we encode 
\[ y_i = \begin{cases} 1 & \text{if respondent } i \text{ supports candidate } A,\\ 0 & \text{if respondent } i \text{ supports candidate } B. \end{cases} \]
Under the assumption that $n \ll N$, individual survey responses are approximately independent. Hence, we can model the samples as independent and identically distributed Bernoulli random variables, 
\[ y_1,\ldots,y_n \stackrel{\mathrm{iid}}{\sim} \mathrm{Bernoulli}(\pi_A), \]
and the sample estimate of support for candidate $A$ is
\[ \hat{\pi}_A=\frac{1}{n}\sum_{i=1}^n y_i. \]

As long as $n$ is sufficiently large and $\pi_A$ is sufficiently far from $0$ and $1$, the Central Limit Theorem gives us an approximate normal sampling distribution on $\hat \pi_A$, 
\[ \hat{\pi}_A \sim \mathcal N \!\left(\pi_A,\frac{\pi_A(1-\pi_A)}{n}\right). \]
Applying the linear transformation $\hat d_{AB}=2\hat\pi_A-1$ yields
\[ \hat d_{AB} \sim  \mathcal N\!\left(d_{AB},\frac{4\pi_A(1-\pi_A)}{n}\right). \]
Since $\pi_A(1-\pi_A)\le 1/4$, we typically use the conservative approximation,
\begin{equation}\label{eq:normApprox}
\hat d_{AB} \sim  N\!\left(d_{AB},\frac1n\right), 
\end{equation}
as described in Section \ref{sec:stan}. 

This approach can also be straightforwardly inverted to the problem of testing for the winner of the election, via the one-sided test of hypothesis given by
\[ H_0:d_{AB}\le0 \qquad \text{vs.} \qquad H_1:d_{AB}>0. \]
Namely, using the normal approximation from \eqref{eq:normApprox}, rejection occurs whenever
\[ \sqrt n\,\hat d_{AB} > z_{1-\alpha}. \]
The following result shows that this procedure is uniformly most powerful under the normal approximation.

\begin{lemma}[FPTP Polls as Approximate UMP Tests]\label{rep:fptp}
Suppose we test 
\[ H_0:d_{AB}\le0 \qquad \text{vs.} \qquad H_1:d_{AB}>0, \]
 and assume the normal approximation
\[ \hat d_{AB} \sim \mathcal{N}(d_{AB}, 1/n). \]
Then, the test which rejects when 
\[ \sqrt{n} \times \hat d_{AB} \geq z_{1 - \alpha} \]
is the uniformly most powerful (UMP) test under the normal approximation.
\end{lemma}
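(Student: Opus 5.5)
The plan is to reduce the problem to the classical one-parameter normal location family and apply the Karlin--Rubin theorem. Under the stated approximation, the data enter only through the single statistic $T=\hat d_{AB}\sim\mathcal N(d_{AB},1/n)$, where the variance $1/n$ is known and does not depend on $d_{AB}$. Writing $\theta=d_{AB}$, the density of $T$ is
\[ f_\theta(t)=\sqrt{\tfrac{n}{2\pi}}\exp\!\left(-\tfrac{n}{2}(t-\theta)^2\right). \]
This is a one-parameter exponential family with natural parameter $n\theta$ and sufficient statistic $t$.

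\textbf{Step 1 (monotone likelihood ratio).} For $\theta_1>\theta_0$, the likelihood ratio is
\[ \frac{f_{\theta_1}(t)}{f_{\theta_0}(t)}=\exp\!\left(n(\theta_1-\theta_0)t-\tfrac n2(\theta_1^2-\theta_0^2)\right). \]
This is strictly increasing in $t$, so the family has monotone likelihood ratio in $T$.

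\textbf{Step 2 (Neyman--Pearson for a simple pair).} Fix any alternative $\theta_1>0$ and test $\theta_0=0$ against it. By the Neyman--Pearson lemma, the most powerful level-$\alpha$ test rejects for large values of the likelihood ratio, which by Step 1 means rejecting when $T>c$. The size condition $P_0(T\ge c)=\alpha$ gives $c=z_{1-\alpha}/\sqrt n$, which is exactly the stated rejection region $\sqrt n\,\hat d_{AB}\ge z_{1-\alpha}$. Since this region does not depend on $\theta_1$, the test is UMP for $H_0:\theta=0$ against $H_1:\theta>0$.

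\textbf{Step 3 (composite null).} It remains to show that the test has level $\alpha$ over all of $H_0:\theta\le0$. The power function is
\[ \beta(\theta)=P_\theta(\sqrt n T\ge z_{1-\alpha})=1-\Phi(z_{1-\alpha}-\sqrt n\,\theta), \]
which is nondecreasing in $\theta$. Hence $\sup_{\theta\le0}\beta(\theta)=\beta(0)=\alpha$, so the test is level $\alpha$ for the composite null.

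It is also most powerful within a larger class. Any test that is level $\alpha$ for $\theta\le0$ is in particular level $\alpha$ at $\theta=0$. By Step 2, no such test can beat ours at any $\theta_1>0$. This completes the UMP claim; equivalently, one can simply cite Karlin--Rubin once MLR is established.

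\textbf{Main obstacle.} The only subtle point is interpretive rather than technical. The statement is UMP \emph{under the normal approximation}: we treat $\hat d_{AB}$ as the observed data with the conservative known variance $1/n$. We do not claim optimality for the exact binomial model, where the variance depends on $\pi_A$ and exact UMP tests would require randomization. I would state this reduction explicitly at the start of the proof, since everything else follows mechanically from Steps 1--3.
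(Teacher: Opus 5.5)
Your proof is correct and follows the same route as the paper: establish the monotone likelihood ratio $\exp\left(n(\theta_1-\theta_0)t-\tfrac{n}{2}(\theta_1^2-\theta_0^2)\right)$ in $\hat d_{AB}$, then invoke Karlin--Rubin, with the critical value fixed by $\sqrt{n}\,\hat d_{AB}\sim\mathcal N(\sqrt{n}\,d_{AB},1)$. The only difference is that you prove the Karlin--Rubin step by hand (Neyman--Pearson at $\theta_0=0$, plus monotonicity of the power function $1-\Phi(z_{1-\alpha}-\sqrt{n}\,\theta)$ to handle the composite null), whereas the paper simply cites Theorem 3.4.1 of Lehmann--Romano.
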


\begin{proof}
Denote as $\mathcal{L}(d' \mid \hat d_{AB})$ the likelihood evaluated at $d_{AB} = d'$. Consider two candidate values $d' < d^\dag$ for $d_{AB}$. Then, under our normality assumption, 
\[ \frac{\mathcal L(d^\dag \mid \hat d_{AB})}{\mathcal L(d' \mid \hat d_{AB})} = \exp\left( n(d^\dag-d')\hat d_{AB} -\frac{n((d^\dag)^2-(d')^2)}{2} \right). \]
This expression is monotone in $\hat d_{AB}$. Hence, by the Karlin--Rubin theorem \citep[Theorem 3.4.1 in][]{lehmann2005testing}, there exists a UMP test of the given structure, with the threshold selected to preserve the Type I error rate. Since, by assumption, 
\[ \sqrt{n} \times \hat d_{AB} \sim \mathcal{N}(\sqrt{n} \cdot d_{AB}, 1), \]
the selection of $z_{1-\alpha}$ as the critical value satisfies the Type I error bound. Hence, under the normal approximation, the test is UMP.
\end{proof}

Along with Lemma \ref{rep:fptp}, these results demonstrate the utility of the traditional frequentist inferential framework in simple, two-candidate FPTP races. The problem reduces to testing the value of a single parameter, whose distribution is easy to approximate, allowing for straightforward UMP hypothesis testing as well as confidence interval construction. 

\section{Modeled Margins of Error}\label{sec:mmoe}

In practice, pollsters commonly weight poll responses to effectively oversample from portions of the electorate that are known to have low poll response rates. See \cite{mercer_for_2018} for a complete discussion of poll weighting. Weighting breaks the assumption that the responses are sampled i.i.d from a Bernoulli distribution. Pollsters sometimes quantify the impact of weighting on the sample variance of the poll with the design effect, denoted $D_{\text{eff}}$, which is estimated under Kish's approximation \citep{kish1995methods} to be 
\[
D_{\text{eff}} = 1 + CV(w)^2
\]
where $CV(w)$ is the coefficient of variation of the sample weights. The design effect is then used to compute the effective sample size
\[
n_{\text{eff}} = \frac{n}{D_{\text{eff}}}
\]
which can be substituted into the formula for the standard error using the conservative variance estimator. Under the Pew Research standard, these uncertainty measures are reported as modeled margins of error (mMoEs) See \cite{mercer_understanding_2016} for further discussion of the design effect. Conservative survey-level margins of error--calculated with either the true sample size or effective sample size due to weighting--form the basis for frequentist measures of polling uncertainty.

\section{Proof of Lemma \ref{lemma:numParams}}\label{proof:paramsLemma}
\begin{proof}
We label the candidates $1, \dots, K$ and write $\mathcal{S} = \{1, \dots, K\}$. To assess the winner of an RCV election, observe that we need to know the support level of each candidate at each elimination round, and we impose no restrictions on how support flows from one candidate to another. Moreover, we do not know a priori which candidate will be eliminated in each round. 

Denote as $2^\mathcal{S}$ the power set of $\mathcal{S}$ such that $S \in 2^\mathcal{S} \implies S \subseteq \mathcal{S}$. Every entry $S \in 2^\mathcal{S}$ such that $|S| \geq 2$, corresponds to a possible set of remaining candidates at the $(K - |S| + 1)^{th}$ elimination round. Define the ``support level" for each candidate in a given set $S$ to mean the percentage of ballots listing that candidate first in an elimination round in which the candidates in $S$ are those who are not yet eliminated. To assess who is eliminated in each round, we need to know the support level for all but one of the candidates (as the final candidate's support level is constrained by the restriction that the support levels add up to 1), such that the minimum value can be identified. Moreover, because we impose no restrictions on how support is redistributed following candidate eliminations, the support vector on a set $S$ may vary freely over the $(|S|-1)$-dimensional simplex. Consequently, no smaller collection of free parameters can characterize all possible support configurations on $S$.

Lastly, observe that there are ${K \choose s}$ subsets $S \subseteq \mathcal{S}$ which have cardinality $s$. Hence, mathematically, we need 
\[ \sum_{s=2}^K {K \choose s}(s-1) \]
parameters to identify the winner. 
Using the binomial identities
\[ \sum_{s=0}^K s {K \choose s} = K2^{K-1} \qquad \text{and} \qquad \sum_{s=0}^K {K \choose s} = 2^K, \]
we obtain
\[ \sum_{s=2}^K {K \choose s}(s-1) = \sum_{s=0}^K {K \choose s}(s-1) + 1 = K2^{K-1} - 2^K + 1 = 2^{K-1}(K-2)+1.\]
Lastly, note that we have expressed the relevant parameters as ``support levels." However, we can always choose one candidate in each set $S$ as the ``reference" candidate, and rewrite the relevant parameters as the difference between that candidate's support level and every other candidate's support level. Hence, we can parameterize the entire problem in terms of margins rather than support levels. 
\end{proof}

\section{Review: Bayesian Statistics} \label{sec:bayes}

This section is based on \cite{bolstad_introduction_2007}, which offers a thorough introduction to Bayesian statistics. The Bayesian approach to statistical inference considers a population parameter to be a random variable whose distribution is estimated using prior knowledge and observed data. Unlike under the frequentist approach, this allows us to make direct probability statements on the value of the parameter. For the simplicity, we consider only discrete probability distributions in this section. Let $A$ be the finite universe of parameter values, and let $a \in A$ be a possible parameter value. To represent our prior knowledge about the data, a prior probability mass function (PMF) is first assumed on $A$. Often, no meaningful assumptions can be made about the parameter beforehand. In this case, a flat, uninformative prior which assigns an equal probability to all $a \in A$ is used. 

Separately, the likelihood of observing the survey data, $y_{\text{data}}$, under different possible values of $a$ is considered. The likelihood function, $\mathcal{L}$, is defined for all $a$ by
\[
\mathcal{L}(a; y_{data}) = f(y_{\text{data}} \mid a) = \mathbb{P}(Y = y_{\text{data}} \mid a )
\]

This is the PMF of the observed data evaluated at $y_{\text{data}}$ viewed as a function of $a$. The likelihood function has the same underlying form the relevant PMF under which the sample data is assumed to be sampled from but with the data instead of the parameter value fixed. By considering the values of $\mathcal{L}(a; y_{data})$ across possible parameter values, we obtain a set of relative weights that indicate how well each $a$ explains the observed data. For example, if 
\[ f(a_i \mid y_{\text{data}}) = 0.1 \quad \text{and} \quad f(a_j \mid y_{\text{data}}) = 0.05\]
then we say that the sample data suggests the parameter is twice as likely to to be $a_i$ than $a_j$. 

The likelihood function is then reconciled against the prior distribution to obtain the posterior distribution of the parameter. Bayes' theorem says that the posterior distribution is proportional to the product of the prior distribution and the likelihood: 
\[
posterior \propto prior \times likelihood
\]
For each $a$, the prior probability is weighted by the likelihood and rescaled to obtain a posterior probability for $a$. Parameter values with high likelihood have increased posterior probabilities, and parameters that are unlikely according to the data have their probabilities reduced. Direct probability statements can then be made on the parameter value via the posterior distribution. 

Many likelihood functions from known probability distributions exhibit a conjugacy property \citep{bolstad_introduction_2007}. This means that the posterior will come from the same family of probability distributions as the prior distribution. Conjugate posteriors can be easily obtained by combining the prior distribution and the likelihood function under simple update rules defined for each conjugate family.

\section{Proof of Lemma \ref{pruneThm} }\label{sec:pruneProof}

\begin{proof}
We show that sequential elimination of minor candidates results in the same distribution of ballots -- and therefore RCV vote counts -- as the batch elimination of a group of minor candidates at once. Suppose $\mathcal C$ is the set of all candidates. Let $\mathcal{C}_m^\star$ be the pruning set of major candidates and $E$ be the set of remaining minor candidates so that $\mathcal C = \mathcal{C}_m^\star \cup E$ and $\mathcal{C}_m^\star \cap E = \emptyset$. Consider an individual ballot $b = (b_1, \ldots, b_R)$ where $R$ is the maximum number of candidates a ballot can rank. Then

\[
b_i \in \mathcal{C}_m^\star \cup E \quad \text{or} \quad b_i = \varnothing \quad \forall i \in \{1, \ldots, R\}
\]
where $b_i = \varnothing$ if there is a blank in the $i^{\text{th}}$ position. 

Let the candidates in $E$ be sequentially eliminated according to RCV mechanics so that the candidate eliminated in the $r^{\text{th}}$ round is $e_r\in E$. Then if $b_i = e_r$ for some $i$, $b$ is updated by replacing $b_i = \varnothing$. 

After $\vert E \vert$ rounds, $b$ contains only major candidates and gaps or it has been exhausted $(\text{if} \hspace{0.5em} b_i = \varnothing \hspace{0.5em} \forall i)$ and is discarded from the set of active ballots. That is, we obtain the result 

\[ b_i \in \mathcal{C}_m^\star \quad \text{or} \quad b_i = \varnothing \quad \forall i \]
which is exactly the same result as if we set all $b_i = \varnothing$ if $b_i \in E$ at once. 

\end{proof}

\end{document}